\newcommand{\ch}{\mathrm{ch}}
\theoremstyle{theorem}
\newtheorem{axprime}{Axiom}
\crefname{axprime}{axiom}{axioms}
\begin{document}
\cleanlookdateon

\begin{center}
\textbf{Social choice with transfers}\footnote{We thank Chris Chambers, Jerry Green, Markus Möller, Hervé Moulin, William Thomson, and Peyton Young for valuable comments. Curello gratefully acknowledges support from the German Research Foundation (DFG) through CRC TR 224 (Project B02).}

\vspace{\baselineskip}
Gregorio Curello\footnote{Department of Economics, University of Mannheim; \texttt{gregorio.curello@uni-mannheim.de}.} and Sam Jindani\footnote{Department of Economics, National University of Singapore; \texttt{sam.jindani@nus.edu.sg}.}

\vspace{\baselineskip}
\today
\end{center}

\begin{quote}
	We consider the problem of social choice when transfers between agents are possible. This includes several canonical applications: public-good provision, management of a common resource, settlement of debts, and division of goods. The question of interest is, given a set of possible alternatives, which should be chosen and what transfers, if any, should be made? We show that the \emph{Shapley value of the stand-alone game} is the only solution to satisfy certain desirable properties. \\

	Keywords: social choice, fairness, transfers, Shapley value.\\
	JEL codes: C71, C78, D60, D63.
\end{quote} 

\onehalfspacing

\section{Introduction}\label{sec_intro}

Consider the problem of social choice when agents have quasi-linear utility in a numéraire good, such as money, and transfers of the numéraire are possible. To fix ideas, suppose a government has earmarked funding to build a power plant. The power plant will serve two cities and there are two possible locations. City 1 derives a benefit of 8 in monetary units from location $A$ and a benefit of 2 from location $B$; city 2 derives benefits of 2 from $A$ and 4 from $B$. The problem is represented in \cref{fig_plant}. The dark-grey area is the set of allocations that are feasible without transfers (allowing for free disposal and randomisation). Assuming the government can make transfers between the cities, the set of feasible allocations is the light-grey area. What is a fair solution to this problem? It seems natural for the power plant to be built at location $A$, since the alternative would be Pareto inefficient. However, city 2 might argue that it should be compensated for forgoing its preferred alternative. Thus a fair solution might be for the plant to be built at location $A$ in exchange for city 1 paying compensation to city 2. The question, then, is how much.

\begin{figure}[t]
	\centering
	\begin{tikzpicture}[scale=.9]
		\draw[fill=gray!20] (0,10) -- (0,0) --  (10,0) -- cycle;
		\draw[fill=gray!50] (8,0) -- (8,2) -- (2,4) -- (0,4) --(0,0) -- cycle;
		\draw [<->] (0,10.5) node [above] {$x_2$} -- (0,0) -- (10.5,0) node [right] {$x_1$};
		\draw[fill] (8,2) circle [radius=.06] node[below left] {$(8,2)$};
		\draw[fill] (2,4) circle [radius=.06] node [below left] {$(2,4)$};
		\draw[fill] (7,3) circle [radius=.06] node [above right] {$x^*=(7,3)$};	
	\end{tikzpicture}
    \caption[]{Building a power plant.}
    \label{fig_plant}
\end{figure}

Beyond this example, we are interested in any situation in which a group of agents or a social planner must choose between multiple alternatives, preferences are quasi-linear in a numéraire good, and transfers of the numéraire are possible. The focus is on identifying a fair allocation assuming the preferences of the agents are known (à la \cite{arrow:1951,sen:1970}), and not on implementation (à la \cite{green:1979}). The data of the problem are a set of players and a set of allocations, each yielding a utility to each player. We will refer to this set of allocations as the \emph{bargaining set} and the set of utility profiles that can be achieved with transfers as the \emph{feasible set}. 

This problem was first formulated by \textcite{green:1983}; we argue that it can be seen as a generalisation of several canonical problems, including public-good provision, management of a common resource, settlement of debts, and division of goods. In each of these problems, a classical solution is what \textcite{moulin:1992} terms the \emph{Shapley value of the stand-alone game}.\footnote{The solution was proposed by \textcite{young:1982} for public-good provision; by \textcite{oneill:1982,aumann:1985a} for the settlement of debts; and by \textcite{moulin:1992} for the division of goods.} Given a social-choice problem, define the \emph{stand-alone claim} of a coalition of players to be the maximal total utility that the coalition can achieve in the bargaining set; this defines a coalitional game with transferable utility called the \emph{stand-alone game}; computing its Shapley value yields a solution to the social-choice problem. In the example of the power plant, city 1's stand-alone claim is 8 and city 2's is 4; the grand coalition's claim is $8+2=10$. In the two-player case, the Shapley value weights each coalition by $1/2$, so city 1's utility under the solution is $\frac{10+8-4}{2}=7$ and city 2's is $\frac{10+4-8}{2}=3$. Thus the outcome is $x^*=(7,3)$: the plant is built at location $A$ and city 1 transfers one monetary unit to city 2. 

We provide a normative foundation for this solution by showing that it is the only one to satisfy certain desirable properties: efficiency, anonymity, the dummy property, continuity, additivity, and monotonicity. Efficiency, anonymity, the dummy property, and continuity are standard \parencite{moulin:1988,thomson:2001}. Additivity captures the idea that the solution to two independent problems should not depend on whether they are considered jointly or separately \parencite{myerson:1981,perles:1981}. Monotonicity states that increasing a given player's utility in an allocation of the bargaining set cannot harm that player \parencite{kalai:1975aa}. Theorem 1 applies in Green's original model \parencite*{green:1983}. Theorem 2 applies in an extended model in which an alternative can be associated with a surplus or deficit in addition to a utility profile \parencite{green:2005}. In this case, we require a strengthening of the dummy property we call the axiom of compromise.

Our proof approach builds on the contributions of \textcite{green:1983,chambers:2005}. In particular, we use a result of \textcite{chambers:2005} that establishes that additive and Lipschitz-continuous solutions admit integral representations. The key insight is that monotonicity, when applied to the integral representation, entails the \emph{stand-alone approach}: the solution to a given problem is determined by the stand-alone claims corresponding to the problem. 

This is the first axiomatisation of the Shapley value of the stand-alone game, whether in the general framework of Green or in the specific classes of problems discussed above. Of course, axiomatisations of the Shapley value for coalitional games with transfers exist \parencite{shapley:1953,young:1985}, but these do not translate to the problem of social choice with transfers. To see why, recall that in coalitional games with transfers, the Shapley value is the unique solution satisfying efficiency, symmetry, the dummy property, and additivity. However in the social-choice model, many solutions satisfy these axioms, some of which are highly unnatural. The reason that uniqueness does not obtain is that social-choice problems are richer than coalitional games, in the sense that coalitional games are defined by a single number for each coalition, whereas social-choice problems are defined by an arbitrary bargaining set. 

The present paper is also the first to axiomatise a solution in the $n$-player problem of social choice with transfers. \textcite{green:1983,chambers:2005} focus on families of additive solutions, while \textcite{green:2005,chambers:2005a} restrict attention to two players. The branch of the literature following \textcite{moulin:1985a} studies a related but distinct problem, in which players have preferences over a fixed set of public actions, which play a fundamental role in the theory.\footnote{\cite{moulin:1985,moulin:1987,chun:1986,chun:1989,chun:2000}. We discuss the relation between the two branches of the literature in \cref{app_welfarism}.}

\section{Model, solution, and applications}\label{sec_model}

An \emph{$n$-player social-choice problem}, or \emph{bargaining set}, is a set $B\subset\mathbb{R}^{n}$ that is nonempty, closed, convex, comprehensive, and bounded above.\footnote{A subset $X$ of $\mathbb{R}^n$ is \emph{comprehensive} if, for every $x\in X$ and $y \in \mathbb{R}^{n}$, $y\le x$ implies $y\in X$; it is \emph{bounded above} if there exists $y\in\mathbb{R}^{n}$ such that $x\le y$ for every $x\in X$.} Let $\mathcal{B}$ be the set of such problems. An element of $ B$ is an \emph{allocation}, representing the utility of each player. Assuming the set is convex amounts to allowing randomisation; assuming it is comprehensive amounts to allowing free disposal. Let $N=\{1,2,\ldots,n\}$. Given $x\in\mathbb{R}^{n}$, let $x_S=\sum_{i\in S}x_i$ for each $S \subseteq N$, with the convention $x_\emptyset = 0$. Agents have quasi-linear preferences in the numéraire good and transfers are possible, so the \emph{feasible set} of allocations is $\{x\in\mathbb{R}^n:x_N\le\max_{y\in B}y_{N}\}$.

A \emph{solution} is a map $\psi: \mathcal{B} \rightarrow \mathbb{R}^n$ such that $\psi_N(B)\le \max_{x\in B}x_{N}$ for every $B\in\mathcal{B}$. The \emph{Shapley value of the stand-alone game} is the solution given by 
	\begin{equation}\label{eq:solution}
		\psi_i(B)=\frac 1 n\sum_{S\subseteq N\setminus i} {n-1 \choose |S|}^{-1}\left(\max_{x \in B} x_{S\cup i}  -\max_{x \in B} x_{S} \right)
	\end{equation}
for every $i\in N$ and $B \in \mathcal{B}$. 

One can think of $\max_{x\in B}x_S$ as the \emph{stand-alone claim} of coalition $S$: it is the maximal total utility $S$ can obtain in $B$ while ignoring the other players. The \emph{stand-alone game} associated with a bargaining problem is the coalitional game with transfers in which the worth of a coalition equals its claim in the bargaining set. The solution is the Shapley value of this game. We will refer to $\max_{x \in B} x_{S\cup i}  -\max_{x \in B} x_{S}$ as the \emph{marginal stand-alone claim}, or simply the \emph{marginal claim}, of player $i$ for coalition $S$.

We now show that the model generalises several classes of problems and we apply the solution to a problem from each class. In some applications, alternatives are associated with surpluses or deficits that are not assigned to particular players. For example, the power plant may have to be funded by the cities themselves rather than central government, and different locations may entail different costs. In \cref{sec_surplus} we extend the model to cover such applications. 

Given a nonempty and bounded above $X\subset\mathbb{R}^{n}$, let $\ch(X)$ denote the smallest set in $\mathcal{B}$ that contains $X$; that is, $\ch(X)$ is the closed, convex, and comprehensive hull of $X$.

\begin{example}[Building shared infrastructure]
	Consider the construction of shared infrastructure, such as the power plant of the introduction. The regulatory frameworks of both the European Union and the United States mandate that the costs of shared energy infrastructure be split in proportion to the benefits \parencite{konstantelos:2017,hogan:2018}. However, these frameworks do not consider benefits that would obtain under alternatives that were not pursued. As a result, the final net benefit received by each party depends on the alternative chosen, which could lead to disagreements between parties. In contrast, the framework of the present paper offers a principled approach to compensating parties for forgone benefits. 

	Specifically, in the example of the power plant, the bargaining set is 
	\begin{equation*}
		B=\ch\left\{(8,2),(2,4)\right\}.
	\end{equation*}
	Then 1's marginal claims are 8 for $\emptyset$ and $10-4=6$ for $\{2\}$, so that $\psi_1(B)=7$. Similarly, 2's marginal claims are 4 for $\emptyset$ and $10-8=2$ for $\{1\}$, so that $\psi_2(B)=3$. Hence $\psi(B)=(7,3)$.
\end{example}

\begin{example}[Managing a common resource]
	Consider countries sharing a freshwater source. Half of the world's population lives within a transnational river basin, and the  management and fair allocation such resources are increasingly crucial \parencite{turgul:2024}. The relevant United Nations convention hinges on the principle of `equitable and reasonable' use. However, the implementation of this principle is open to interpretation, which leads to disputes \parencite{mcintyre:2013}. Adopting a precise definition of the principle could avert such disputes. 
	
	To illustrate, suppose three countries rely on a river for freshwater. They each have equal rights to the use of the river, subject to the  constraint that the total amount drawn is at most $\bar w$ per year. If country $i$ draws $w_i$, it derives a benefit of  $\alpha_i\sqrt{w_i}$ in monetary terms, where $\alpha_i>0$. The bargaining set is
	\begin{equation*}
		B=\ch\left\{(\alpha_1\sqrt{w_1},\alpha_2\sqrt{w_2},\alpha_3\sqrt{w_3}): w\in\mathbb{R}^3_+ \text{ and } w_N\le \bar w\right\}.
	\end{equation*}
	The stand-alone claim of coalition $\{i\}$ is $\alpha_i \sqrt{\bar w}$. Solving the appropriate constrained optimisation problem, we find that the stand-alone claim of coalition $\{i,j\}$, where $j\ne i$, is $\sqrt{\bar w(\alpha_i^2+\alpha_j^2)}$. Similarly, the maximum total surplus in $B$ is $\sqrt{\bar w(\alpha_1^2+\alpha_2^2+\alpha_3^2)}$.

	Suppose $\bar w= 100$ and $\alpha_i=i$ for each country $i$. Then we obtain $\psi(B)\approx (4.5,11.7,21.3)$. The efficient water allocation is approximately $(7.1,28.6,64.3)$; to achieve the utilities, country 3 makes monetary transfers of approximately 1.8 to country 1 and 1.0 to country 2. 
\end{example}

\begin{example}[Settling debts]
	Consider the problem of how to compensate creditors following a bankruptcy. In modern law, creditors within the same class are paid in proportion to their claim. However, other solutions are also used, both in practice and in the theoretical literature; the one considered in the present paper is prominent among these \parencite{oneill:1982,aumann:1985a,thomson:2003}. 

	To illustrate how the solution applies to such problems, suppose a bankrupt company owes 100, 200, and 400 to companies 1, 2, and 3, respectively; however, its assets are only worth 300. We model this situation as the social-choice problem given by
	\begin{equation*}
		B=\ch\{x\in\mathbb{R}^3_+: x_1\le 100, \text{ } x_2\le 200, \text{ } x_3\le 400, \text{ and } x_1+x_2+x_3\le 300\}.
	\end{equation*}
	Then company 1's marginal claim for coalitions $\emptyset$ and $\{2\}$ is 100, and its marginal claim for other coalitions is zero. Hence $\psi_1(B)=50$. Company 2's marginal claim for coalitions $\emptyset$ and $\{1\}$ is 200, and its marginal claim for other coalitions is zero. Hence $\psi_2(B)=100$. Company 3 receives the residual; that is, $\psi_3(B)=150$.
\end{example}

\begin{example}[Dividing a good] 
	Consider the problem of allocating an indivisible good to which multiple parties have a claim. For instance, a bequest might leave a property in joint ownership. One of the joint owners may value owning the property outright more than its market value, and may wish buy out the others. In practice, the usual solution to this problem is to have the property valued and for one of the owners to buy the others' shares at the market value. However, this is problematic if multiple owners value the property at more than its market value \parencite{chang:2014}. Instead, the approach advanced in the present paper is for compensation to depend not on the market value of the property, but on the value each party attaches to owning the property outright. 

	For the sake of concreteness, suppose three siblings have inherited equal shares in a house. Siblings 1, 2, and 3 attach value 300, 270, and 240 respectively to owning the house outright.\footnote{This example is adapted from \textcite[chap.\ 9]{young:1994a}.} (Perhaps sibling 1 attaches a higher sentimental value to the house than sibling 2, while sibling 3 would sell the house.) It is socially efficient for sibling 1 to receive the house; the question is how siblings 2 and 3 should be compensated.
	
	The bargaining set is 
	\begin{equation*}
		B=\ch\{(300,0,0),(0,270,0),(0,0,240)\}.
	\end{equation*}
	The marginal claim of sibling 3 for coalition $S\subseteq N$ is 240 if $S=\emptyset$ and 0 otherwise. Hence 
	\begin{equation*}
		\psi_3(B)=\frac 1 3 {2 \choose 0}^{-1}240=80.
	\end{equation*}
	The marginal claim of sibling 2 for coalition $S\subseteq N$ is 270 if $S=\emptyset$, 30 if $S=\{3\}$, and 0 otherwise. Hence
	\begin{equation*}
		\psi_2(B)=\frac 1 3 {2 \choose 0}^{-1}270+\frac 1 3 {2 \choose 1}^{-1}30=95.
	\end{equation*}
	Finally, sibling 1 receives the residual; that is,
	\begin{equation*}
		\psi_1(B)=300-80-95=125.
	\end{equation*}
	Thus sibling 1 receives the house and transfers 95 to sibling 2 and 80 to sibling 3. 
\end{example}

\section{Axioms and result}
\label{sec_main}

We impose the following axioms. 

\begin{axiom}[Efficiency]\label{ax_efficiency}
	For every $B\in\mathcal{B}$ and $x\in B$, $\psi_N(B)= x_N$. 
\end{axiom}

Given a permutation $\rho : N \to N$, $x \in \mathbb{R}^n$, and $B \in \mathcal{B}$, let $\rho(x) = (x_{\rho(i)})_{i \in N}$ and $\rho(B) = \{\rho(y) : y \in B\}$. 

\begin{axiom}[Anonymity]\label{ax_anonymity}
	For every permutation $\rho : N \to N$ and $B \in \mathcal{B}$, $\psi(\rho(B)) = \rho(\psi(B))$.
\end{axiom}

Given $X\subseteq\mathbb{R}^n$, let $X^*$ denote its strong Pareto frontier; that is,
\begin{equation*}
	X^*=\{x\in X: \text{there is no $y\in X$ such that $y>x$}\}.
\end{equation*}

\begin{axiom}[Dummy property]\label{ax_dummy}
	For every $B \in \mathcal{B}$ and $i\in N$ such that $x_i=0$ for every $x\in B^*$, $\psi_i(B)= 0$. 
\end{axiom}

\Cref{ax_efficiency,ax_anonymity,ax_dummy} are standard in the literature on cooperative game theory \parencite{moulin:1988,thomson:2001}. The fact that bargaining sets are comprehensive means that a dummy player is properly a player who receives zero utility in every efficient allocation.

\begin{axiom}[Continuity]\label{ax_continuity}
	$\psi$ is Lipschitz continuous with respect to the Hausdorff metric on $\mathcal{B}$.
\end{axiom}
\Cref{ax_continuity} is stronger than typical; we require Lipschitz continuity in order to appeal to a result of \textcite{chambers:2005}.

\begin{axiom}[Additivity]\label{ax_additivity}
	For every $B,C \in \mathcal{B}$, $\psi(B+C) = \psi(B) + \psi(C)$.
\end{axiom}

Set addition is meant in the sense of Minkowski. \Cref{ax_additivity} implies that solving two independent problems jointly should yield the same solution as if they were considered separately. By independent problems, we mean that the choice of alternative in one problem does not affect the utilities of the alternatives in the other problem. For example, suppose the locations of both a power plant and an airport are to be determined. It may be that the benefits that accrue from the different choices of location for the airport are not affected by the choice of location for the plant, and vice versa. If two problems are independent in this sense, then the set of utility profiles that can be achieved by considering the two problems jointly (without transfers) is the sum of the individual bargaining sets. In this case, additivity imposes that the solution to the joint problem be the sum of the solutions to the individual problems. However, if two problems are not independent, then the bargaining set of the joint problem does not equal the sum of the individual bargaining sets. For example, it may be that there is a regulatory minimum distance between the airport and the power plant and that some of the possible location combinations breach this distance. In this case, additivity does not bite.\footnote{For further discussion see \cite{roth:1979,myerson:1981,perles:1981,peters:1986}.}

By a slight abuse of notation, given $X \subseteq \mathbb{R}^n$ and $x\in\mathbb{R}^n$, we will write `$X\cup x$' to mean $X\cup\{x\}$.\footnote{Analogously, we will let `$X + x$' mean $\{y+x: y \in X\}$. If $S\subseteq N$ and $i\in N$ we will let `$S\cup i$' mean $S\cup\{i\}$ and `$S\setminus i$' mean $S\setminus\{i\}$.}  If $X\subset\mathbb{R}^n$ is nonempty and bounded above, then `$\psi(X)$' is understood to mean $\psi(\ch(X))$.

\begin{axiom}[Monotonicity]\label{ax_mon}
	For every $B \in \mathcal{B}$, $i \in N$, and $x\in\mathbb{R}^n$ such that   there exists $y\in B$ such that $x_i \ge y_i$ and $y_j = x_j$ for each $j \ne i$, $\psi_i(B\cup x) \ge \psi_i(B)$.
\end{axiom}

\Cref{ax_mon} states that improving a player's payoff in an allocation of the bargaining set, while keeping all other players' payoffs the same, should weakly benefit that player. A number of monotonicity properties have been studied in the literature on cooperative game theory; ours is most similar to that of \textcite{kalai:1975aa}.\footnote{See also \cite{roth:1979a,thomson:1980,young:1985,moulin:1988,chun:1989}.}

Our main result is the following.

\begin{theorem}
	\label{theorem:solution}
	A solution on $\mathcal{B}$ satisfies efficiency, anonymity, the dummy property, continuity, additivity, and monotonicity if and only if it is the Shapley value of the stand-alone game. 
\end{theorem}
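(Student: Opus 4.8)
The plan is to prove the two implications separately: the ``if'' direction by direct verification, and the ``only if'' direction through an integral representation obtained from \textcite{chambers:2005}.

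For the ``if'' direction I would check that the solution in \eqref{eq:solution} satisfies the six axioms. Efficiency and anonymity are inherited from the corresponding properties of the Shapley value of a coalitional game, once one observes that the stand-alone game of $\rho(B)$ is that of $B$ with coalitions relabelled by $\rho$, and that $\sum_{i\in N}\psi_i(B)=\max_{x\in B}x_N$ by the telescoping identity. The dummy property holds because, when $x_i=0$ for every $x\in B^*$, comprehensiveness gives $\max_{x\in B}x_{S\cup i}=\max_{x\in B}x_S$ for all $S$, so every marginal claim of $i$ vanishes. Additivity follows because support functions add under Minkowski sums, so $\max_{x\in B+C}x_S=\max_{x\in B}x_S+\max_{x\in C}x_S$, and \eqref{eq:solution} is linear in the claims; continuity follows because each map $B\mapsto\max_{x\in B}x_S$ is $1$-Lipschitz in the Hausdorff metric and \eqref{eq:solution} is a fixed linear combination of such maps. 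Monotonicity holds because passing from $B$ to $B\cup x$, where $x$ agrees with some $y\in B$ except that $x_i\ge y_i$ as in \cref{ax_mon}, weakly increases $\max_z z_{S\cup i}$ while leaving $\max_z z_S$ unchanged for every $S\subseteq N\setminus i$, so every marginal claim of $i$ weakly increases.

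For the ``only if'' direction, let $\psi$ satisfy the axioms. I would identify each $B\in\mathcal{B}$ with its support function $h_B(p)=\max_{x\in B}p\cdot x$, which by comprehensiveness may be viewed as a convex function on the simplex $\Delta=\{p\in\mathbb{R}^n_+:p_N=1\}$; additivity and Lipschitz continuity then permit an appeal to \textcite{chambers:2005}, yielding signed measures $\mu_1,\dots,\mu_n$ of bounded variation on $\Delta$ with $\psi_i(B)=\int_\Delta h_B\,\mathrm{d}\mu_i$ for all $i$ and $B$. The crux is to show that monotonicity forces $\psi_i(B)$ to depend only on the stand-alone claims $\bigl(\max_{x\in B}x_S\bigr)_{S\subseteq N}$ — equivalently, that each $\mu_i$ is supported on the finite set of barycentric directions $\{\mathbf{1}_S/|S|:\emptyset\neq S\subseteq N\}$, $\mathbf{1}_S$ denoting the indicator vector of $S$. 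Here one uses $h_{B\cup x}(p)-h_B(p)=\max\{0,\,p\cdot x-h_B(p)\}$, so \cref{ax_mon} reads $\int_\Delta\max\{0,\,p\cdot x-h_B(p)\}\,\mathrm{d}\mu_i\ge 0$; taking $B$ to be a polyhedron with a vertex $y\in B^*$ whose normal cone meets $\Delta$ in a small neighbourhood of a prescribed direction, and letting $x$ raise coordinate $i$ of $y$ by $t\downarrow 0$, this localises to $\int_C p_i\,\mathrm{d}\mu_i\ge 0$ for a rich family of cones $C\subseteq\mathbb{R}^n_+$, which gives $\mu_i\ge 0$ on $\{p\in\Delta:p_i>0\}$. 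Combined with the identity $\sum_i\mu_i=n\,\delta_{\mathbf{1}_N/n}$ forced by efficiency, this pins $\mu_i$ on the relative interior of $\Delta$ down to a multiple of $\delta_{\mathbf{1}_N/n}$; finally, an induction on $n$ — in which the dummy property, applied to bargaining sets in which a designated player is a dummy, identifies the restriction of $\psi$ to each facet of $\Delta$ with the solution of an $(n-1)$-player problem — propagates the conclusion to the boundary faces. This yields $\psi_i(B)=\sum_{S\subseteq N}c_{i,S}\max_{x\in B}x_S$ with $c_{i,S}=\mu_i(\{\mathbf 1_S/|S|\})/|S|$.

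It then remains to pin down the $c_{i,S}$ by a finite computation. Anonymity makes $c_{i,S}$ depend only on $|S|$ and on whether $i\in S$. The dummy property, applied to bargaining sets in which $i$ is a dummy (so $\max_{x\in B}x_{S\cup i}=\max_{x\in B}x_S$ for all $S$), forces $c_{i,S\cup i}=-c_{i,S}$ for $S\subseteq N\setminus i$, whence $\psi_i(B)=\sum_{S\subseteq N\setminus i}c_{i,S\cup i}\bigl(\max_{x\in B}x_{S\cup i}-\max_{x\in B}x_S\bigr)$ is a weighted sum of $i$'s marginal claims. Writing $c_{i,S}=\alpha_{|S|}$ for $i\in S$, efficiency forces the recursion $t\,\alpha_t=(n-t)\,\alpha_{t+1}$ for $1\le t\le n-1$ together with $n\,\alpha_n=1$; solving gives $\alpha_t=\tfrac1n\binom{n-1}{t-1}^{-1}$, which is exactly \eqref{eq:solution}. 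At each stage the ``matching of coefficients'' is justified by the linear independence of the functionals $B\mapsto\max_{x\in B}x_S$ on $\mathcal{B}$ (and over the relevant sub-families of bargaining sets with a designated dummy), which can be verified on an explicit family of box-like sets via M\"obius inversion over the subset lattice.

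The main obstacle is the crux of the third paragraph: extracting from monotonicity that the representing measures live on the barycentric directions. Translating \cref{ax_mon} into a usable positivity statement requires bargaining sets with carefully controlled Pareto frontiers, and the step from ``$\mu_i\ge 0$ off the facet $\{p_i=0\}$'' to ``$\mu_i$ supported at the barycentres'' is precisely where efficiency and the dummy property — via the inductive reduction to lower-dimensional problems — must re-enter and be combined with care.
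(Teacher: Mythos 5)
Your verification of the six axioms and your concluding finite computation (the recursion $t\,\alpha_t=(n-t)\,\alpha_{t+1}$ with $n\alpha_n=1$, yielding $\alpha_t=\tfrac1n\binom{n-1}{t-1}^{-1}$) are essentially the paper's arguments, and the overall architecture --- Chambers--Green representation, then monotonicity to force the representing measures onto the coalition directions, then a finite coefficient computation --- is also the paper's. The gap is exactly at the crux you flag. Monotonicity, localised as you propose via $h_{B\cup x}(p)-h_B(p)=\max\{0,\,p\cdot x-h_B(p)\}$, yields only $\mu_i\ge0$ on $\{p_i>0\}$; together with $\sum_i\mu_i=n\,\delta_{\mathbf 1_N/n}$ this kills the measures on the relative interior of $\Delta$, but on a facet $\{p_j=0\}$ the configuration $\mu_k=\nu$ for all $k\ne j$ and $\mu_j=-(n-1)\nu$, with $\nu\ge0$ an arbitrary (possibly non-atomic) measure on the facet's relative interior, survives everything derived so far: perturbing coordinate $j$ changes $h_B(p)$ by $t p_j=0$ on that facet, so monotonicity says nothing about $\mu_j$ there. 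The entire boundary therefore rests on the ``induction on $n$ via the dummy property'', which is named but not executed. Carrying it out requires (i) uniqueness of the representing measures for the $(n-1)$-player restriction, and (ii) untangling the fact that for a dummy-$j$ problem $h_B(p)$ depends only on $p_{-j}$, so the integrals see only the pushforward of $\|p_{-j}\|_1\,d\mu_k$ under the collapse $p\mapsto p_{-j}/\|p_{-j}\|_1$, which mixes the facet $\{p_j=0\}$ with the other faces and the interior atom. This can be rescued (once the interior is dead, the preimage of the relative interior of the sub-simplex within the surviving support is essentially the relative interior of the facet plus the barycentre), but none of that is in your text, and iterating to lower-dimensional faces adds real bookkeeping.

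For comparison, the paper avoids both the sign argument and the induction. Working with $\psi(B)=\int(\max_{x\in B}x\cdot v)\,h(v)\,d\mu(v)$, it shows (\cref{lemma:vs}) that every $\mu$-Lebesgue point $v$ of $h$ equals some $v^S$: taking a counterexample $v$ with \emph{maximal} support $S$, it builds a base set $B_0$ and a slightly bulged set $B_{\alpha r}$; monotonicity (for $i\in S$) and the dummy property (for $i\notin S$), with efficiency, force $\psi(B_{\alpha r})=\psi(B_0)$, while the Lebesgue differentiation theorem forces the normalised difference to converge to $h(v)$, whence $h(v)=\mathbf{0}^n$, contradicting $\|h(v)\|=1$. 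This single local construction treats interior and boundary directions uniformly, the maximal-support device doing the work of your induction. To complete your route you must supply the facet argument in full; otherwise the paper's local contradiction at Lebesgue points is the cleaner path.
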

In \cref{sec_existence} we prove that the Shapley value of the stand-alone game satisfies \cref{ax_efficiency,ax_additivity,ax_continuity,ax_dummy,ax_anonymity,ax_mon}; in \cref{sec_uniqueness} we prove that it is the only solution to do so. It is especially interesting that this solution emerges despite the fact that coalitions play no explicit role either in the model (unlike in coalitional games, subcoalitions cannot achieve anything severally from the grand coalition) or in any of the axioms.

\subsection{Proof that the solution satisfies the axioms}\label{sec_existence}

Efficiency, anonymity, and the dummy property follow from standard arguments that establish these properties for the Shapley value. Continuity and additivity follow from the fact that, for each $S \subseteq N$, the map $B \mapsto \max_{x\in B}x_S$ is linear and Lipschitz-continuous on $\mathcal{B}$. 

To establish monotonicity, fix $B\in\mathcal{B}$, $i\in N$, and $x\in\mathbb{R}^n$ satisfying the hypotheses of the axiom. Then, for each $S\subseteq N\setminus i$,
\begin{align*}
	&\max_{z\in B\cup x}z_{S\cup i}\ge \max_{z\in B}z_{S\cup i} \text{ and}\\
	&\max_{z\in B\cup x}z_{S} = \max_{z\in B}z_{S}.
\end{align*}
Hence $\psi_i(B\cup x)\ge \psi_i(B)$.

\subsection{Proof that no other solution satisfies the axioms}
\label{sec_uniqueness}
Write $\mathbf{0}^n$ and $\mathbf{1}^n$ for the vectors in $\mathbb{R}^n$ consisting entirely of zeroes and of ones, respectively. Let 
\begin{align*}
	\mathcal{B}_0&=\Big\{B \in \mathcal{B} : \max_{x \in B}x_N = 0\Big\},\\
	\mathbb{S}_+ &= \{x \in \mathbb{R}^n_+ : \|x\| = 1\} \setminus \left\{\mathbf{1}^n/\sqrt{n}\right\}, \text{and}
	\\ \mathbb{S}_0 &= \left\{x \in \mathbb{R}^n : \|x\| = 1 \text{ and }x_N = 0\right\},
\end{align*}
where $\|\cdot\|$ denotes the Euclidean norm on $\mathbb{R}^n$. \Cref{fig_n2} shows $\mathbb{S}_+$ and $\mathbb{S}_0$ for $n=2$.

Let $\mathcal{N}$ be the set of nonempty proper subsets of $N$, and, for each $S \in \mathcal{N}$, let $v^S\in \mathbb{S}_+$ be given by
\begin{equation*}
	v^S_i=
	\begin{cases}
		\frac{1}{\sqrt{|S|}} & \text{if }i\in S\\
		0 & \text{otherwise.}
	\end{cases}
\end{equation*}
By a slight abuse of notation, we will write `$v^i$' to mean $v^{\{i\}}$.

\begin{figure}[t]
	\definecolor{my_green}{RGB}{102,194,165}
	\definecolor{my_orange}{RGB}{252,141,98}
	\definecolor{my_purple}{RGB}{141,160,203}
	\centering
	\begin{tikzpicture}[scale=1.4]
		\draw[thick,color=my_purple,fill=my_purple!30] (4-2*2^.5,-4) -- (4-2*2^.5,-2*2^.5) arc[start angle=0, end angle=90, radius=4] -- (-4,4-2*2^.5);
		\fill[color=my_purple!30] (-4,-4) -- (4-2*2^.5,-4) -- (-4,4-2*2^.5) -- cycle;
		\draw [->] (-4,0) -- (5,0) node [right] {$x_1$};
		\node [above right] at (0,0) {$0$};
		\draw [->] (0,-4) -- (0,5) node [right] {$x_2$};
		\draw [dashed] (4,-4) -- (-4,4) ;
		\draw[thick,color=my_green] (4,0) arc[start angle=0, end angle=90, radius=4];
		\draw[color=my_green,fill=white] (45:4) circle [radius=.05] node[above right,black] {$\sfrac{\mathbf{1}^n}{\sqrt{n}}$};
		\draw[color=my_orange,fill] (135:4) circle [radius=.05] node[above right] {$\mathbb{S}_0$};
		\draw[color=my_orange,fill] (-45:4) circle [radius=.05];
		\node[my_purple] at (-2,-2) {$B_0$};
		\draw[fill] (25:4) circle [radius=.05] node[above right] {$v$};
		\node [above right,color=my_green] at (70:4) {$\mathbb{S}_+$};
		\draw[fill] ([shift=(25:4.15)] -2*2^.5,-2*2^.5)  circle [radius=.05] node[above right] {$\xi^\alpha(v)$};
	\end{tikzpicture}
    \caption[]{$\mathbb{S}_+$ and $\mathbb{S}_0$ for $n=2$, and $B_0$ and $\xi^\alpha(v)$ for $S = N$.}
    \label{fig_n2}
\end{figure}

The following result is a straightforward variant of Chambers and Green's  \parencite*{chambers:2005} theorem 1.

\begin{lemma}\label{lemma:representation}
	A solution $\psi$ satisfies efficiency, continuity, and additivity if and only if there exist a finite Borel measure $\mu$ on $\mathbb{S}_+$ and a Borel-measurable function $h : \mathbb{S}_+ \to \mathbb{S}_0$ such that
	\begin{equation}
		\label{eq:representation}
		\psi(B) = \int \left(\max_{x \in B} x \cdot v\right) h(v) \diff \mu(v) 
	\end{equation}
	for every $B \in \mathcal{B}_0$.
\end{lemma}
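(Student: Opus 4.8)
First, suppose $\psi$ admits the stated representation on $\mathcal{B}_0$, extended to all of $\mathcal{B}$ by diagonal translation equivariance, $\psi(B+t\mathbf{1}^n)=\psi(B)+t\mathbf{1}^n$. Writing $\sigma_B(v):=\max_{x\in B}x\cdot v$, additivity is immediate from the identity $\sigma_{B+C}=\sigma_B+\sigma_C$ for Minkowski sums; Lipschitz continuity follows from $\sup_{\|v\|=1}\left|\sigma_B(v)-\sigma_C(v)\right|=d_H(B,C)$ together with finiteness of $\mu$ and $\|h\|\equiv 1$, which yield $\|\psi(B)-\psi(C)\|\le\mu(\mathbb{S}_+)\,d_H(B,C)$; and efficiency holds because, for $B\in\mathcal{B}_0$, the condition $h(v)\in\mathbb{S}_0$ forces $\psi_N(B)=\int\sigma_B(v)\,h(v)_N\,\diff\mu(v)=0=\max_{x\in B}x_N$, with the general case following from the extension.

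Conversely, suppose $\psi$ satisfies efficiency, continuity, and additivity. The plan is to identify $\mathcal{B}_0$ with a cone of support functions and then apply a Riesz-type representation, as in Chambers and Green's theorem~1. Put $\overline{\mathbb{S}}_+=\mathbb{S}_+\cup\{\mathbf{1}^n/\sqrt{n}\}$. Since every $B\in\mathcal{B}$ is closed, convex, comprehensive, and bounded above, $\sigma_B$ is finite and continuous on the nonnegative directions, and $B\mapsto\sigma_B|_{\overline{\mathbb{S}}_+}$ is a bijection from $\mathcal{B}_0$ onto a convex cone $K\subseteq C(\overline{\mathbb{S}}_+)$ that carries Minkowski addition to pointwise addition and the Hausdorff metric to the uniform metric; moreover every $f\in K$ vanishes at $\mathbf{1}^n/\sqrt{n}$, since there $\sigma_B$ equals $\tfrac{1}{\sqrt{n}}\max_{x\in B}x_N=0$. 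Transport $\psi$ to $\Psi:K\to\mathbb{R}^n$ by $\Psi(\sigma_B)=\psi(B)$; it is additive and uniformly Lipschitz, hence positively homogeneous (integer, then rational, then---by continuity---real scalars), so it extends linearly and boundedly to the span $K-K$. The classical density of differences of support functions, transplanted to the nonnegative orthant, makes $K-K$ uniformly dense in the hyperplane $C_0:=\{f\in C(\overline{\mathbb{S}}_+):f(\mathbf{1}^n/\sqrt{n})=0\}$, so $\Psi$ extends by continuity to a bounded linear map on $C_0$ and thence, via Hahn--Banach, to a bounded linear map $C(\overline{\mathbb{S}}_+)\to\mathbb{R}^n$. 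Applying the Riesz representation theorem coordinatewise yields an $\mathbb{R}^n$-valued Borel measure $\nu$ on $\overline{\mathbb{S}}_+$ with $\psi(B)=\int\sigma_B\,\diff\nu$ for $B\in\mathcal{B}_0$; taking $\mu$ to be its total variation (finite and positive) and $h$ a Borel density of $\nu$ with respect to $\mu$, the polar decomposition gives $\|h\|=1$ $\mu$-almost everywhere, so $\psi(B)=\int\sigma_B(v)\,h(v)\,\diff\mu(v)$.

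It remains to make two adjustments. Efficiency gives $0=\max_{x\in B}x_N=\psi_N(B)=\int\sigma_B(v)\,h(v)_N\,\diff\mu(v)$ for every $B\in\mathcal{B}_0$; since the differences $\sigma_B-\sigma_C$ are dense in $C_0$, the finite signed measure $h(\cdot)_N\,\mu$ annihilates $C_0$ and is therefore a point mass at $\mathbf{1}^n/\sqrt{n}$. But the diagonal direction contributes nothing to $\psi(B)$ for $B\in\mathcal{B}_0$ (there $\sigma_B(\mathbf{1}^n/\sqrt{n})=0$), so one may discard $\mu(\{\mathbf{1}^n/\sqrt{n}\})$ and regard $\mu$ as a finite Borel measure on $\mathbb{S}_+$; on $\mathbb{S}_+$ one then has $h(v)_N=0$ for $\mu$-almost every $v$, which with $\|h(v)\|=1$ puts $h(v)\in\mathbb{S}_0$. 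Redefining $h$ on the remaining $\mu$-null set (say, constantly) so that $h:\mathbb{S}_+\to\mathbb{S}_0$ everywhere completes the construction.

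I expect the main obstacle to be the topological dictionary between $\mathcal{B}_0$ and $K$: the continuity of $\sigma_B$ up to the relative boundary of $\mathbb{R}^n_+$, the identification of the Hausdorff metric with the uniform metric on support functions of comprehensive sets, and the density of $K-K$ in $C_0$ (the classical fact that linear combinations of support functions of convex bodies are uniformly dense in $C(S^{n-1})$, transplanted to the nonnegative orthant). These are precisely the ingredients furnished by Chambers and Green's theorem~1, so in the write-up the bulk of the converse can be delegated to that result, leaving only the two genuinely new steps above: passing to $\mathcal{B}_0$ to neutralise the diagonal direction (which is what removes $\mathbf{1}^n/\sqrt{n}$ from $\mathbb{S}_+$), and invoking efficiency to confine $h$ to $\mathbb{S}_0$.
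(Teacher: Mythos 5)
Your proposal is correct and follows essentially the same route as the paper, which simply invokes Chambers and Green's theorem~1 and notes that the lemma is a ``straightforward variant'' of it: you reconstruct their support-function/Riesz argument and correctly isolate the two adaptations needed here, namely restricting to $\mathcal{B}_0$ so that the diagonal direction $\mathbf{1}^n/\sqrt{n}$ can be excised from the domain of $\mu$, and using efficiency to force $h$ to take values in $\mathbb{S}_0$. The only caveat, shared with the paper's own statement, is that the ``if'' direction requires some convention for extending $\psi$ off $\mathcal{B}_0$ (your diagonal translation equivariance is the natural one).
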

One can interpret each $v\in \mathbb{S}_+$ as defining a weighted coalition, where $v_i$ is the weight placed on player $i$, and $\max_{x\in B}x\cdot v$ as being the claim of the weighted coalition; then $v^S$ corresponds to the unweighted coalition $S$ and $\max_{x\in B}x\cdot v^S$ is the claim of coalition $S$. \Cref{lemma:representation} does not impose any restrictions on $\mu$ other than finiteness nor on $h$ other than measurability. There is therefore a broad class of solutions satisfying efficiency, continuity, and additivity. As discussed by \textcite{chambers:2005}, anonymity and the dummy property impose some restrictions, but the class of solutions satisfying \cref{ax_efficiency,ax_additivity,ax_continuity,ax_dummy,ax_anonymity} is still broad, and includes some that are highly unnatural. We will see that adding monotonicity rules out such solutions.

In what follows, fix a solution $\psi$ satisfying \cref{ax_efficiency,ax_additivity,ax_continuity,ax_dummy,ax_anonymity,ax_mon} and $\mu$ and $h$ satisfying the hypotheses of \cref{lemma:representation}. It suffices to prove that $\psi$ is uniquely determined on $\mathcal{B}_0$ by additivity and the fact that, for every $x\in\mathbb{R}^n$, 
\begin{align}
	\psi(\{x\})&=\sum_{i\in N}x_i\psi(\{v^i\})\nonumber\\
	&=\sum_{i\in N}x_iv^i\nonumber\\
	&=x,\label{eqn:trans_inv}
\end{align}
where the first equality follows from additivity and the second from efficiency and the dummy property.

Let $\mathbb{B}_+(v,r) = \{w \in \mathbb{S}_+ : \|w-v\| \le r\}$ for every $v\in\mathbb{S}_+$ and $r>0$. We will work with a set of $v\in \mathrm{supp}(\mu)$ such that $h(v)$ can be approximated by the $\mu$-weighted average of $h$ in a neighbourhood of $v$.
Let $\mathbb{L}$ be the set of $\mu$-Lebesgue points of $h$; that is, the set of $v \in \mathrm{supp}(\mu)$ satisfying 
\begin{equation*}
	\lim_{r \downarrow 0} \frac{\int_{\mathbb{B}_+(v,r)} \|h(w)-h(v)\| \diff \mu(w)}{\mu(\mathbb{B}_+(v,r))} = 0.
\end{equation*}
The Lebesgue differentiation theorem ensures that $\mu$-almost every $v \in \mathbb{S}_+$ lies in $\mathbb{L}$ \parencite[corollary 2.9.9]{federer:1969}.

The next lemma establishes that $\mu$-Lebesgue points of $h$ must correspond to unweighted coalitions. 

\begin{lemma}
	\label{lemma:vs}
	$\mathbb{L} \subseteq \{v^S : S \in \mathcal{N}\}$.
\end{lemma}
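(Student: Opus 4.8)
The plan is to feed a suitable family of \emph{cap} perturbations into monotonicity and the integral representation of \cref{lemma:representation}. Fix $v\in\mathbb{L}$ and write $S=\mathrm{supp}(v)$. The key claim is that $h_i(v)\ge 0$ for every $i\in S$.

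To prove the claim, fix $i\in S$ and let $D=\ch(\bar B(p,R))$ be the comprehensive hull of a Euclidean ball of radius $R$ about a point $p$ with $p_N=-R\sqrt n$; then $D\in\mathcal{B}_0$ and $\max_{z\in D}z\cdot w=p\cdot w+R$ for every $w\in\mathbb{S}_+$. Given $R'>R$, perturb $D$ by the point $x=p+R'v$. Letting $y$ be the point that agrees with $x$ except that its $i$-th coordinate is $p_i$, one checks that $y\in D$ iff $R'\sqrt{1-v_i^2}\le R$ and that $D\cup x\in\mathcal{B}_0$ iff $R'v_N\le R\sqrt n$; both hold once $R'$ is close enough to $R$, because $v_i>0$ and $v\ne\mathbf{1}^n/\sqrt n$. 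Thus $y$ witnesses admissibility of the perturbation for \cref{ax_mon}, and since $\max_{z\in D\cup x}z\cdot w-\max_{z\in D}z\cdot w=(R'v\cdot w-R)^+$ on $\mathbb{S}_+$, monotonicity and \eqref{eq:representation} give, with $t:=R/R'$,
\begin{equation*}
	\int_{\{w\in\mathbb{S}_+:\,v\cdot w>t\}}(v\cdot w-t)\,h_i(w)\diff\mu(w)\ \ge\ 0\qquad\text{for all }t<1\text{ close to }1.
\end{equation*}
Since $\{w\in\mathbb{S}_+:v\cdot w>t\}=\{w\in\mathbb{S}_+:\|w-v\|<\sqrt{2(1-t)}\}$, these balls shrink to $\{v\}$ as $t\uparrow1$. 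Writing $H(s):=\int_{\{v\cdot w>s\}}h_i\diff\mu$, Fubini rewrites the display as $\int_t^1 H(s)\diff s\ge 0$. Now $v\in\mathbb{L}\subseteq\mathrm{supp}(\mu)$, so each ball has positive $\mu$-measure, and the Lebesgue-point property yields $H(s)\big/\mu(\{v\cdot w>s\})\to h_i(v)$ as $s\uparrow1$; if $h_i(v)<0$ this makes $H$ negative near $1$, hence $\int_t^1 H(s)\diff s<0$ for $t$ near $1$ — a contradiction, which proves the claim. (The one genuine subtlety is that $\mu$ need not be doubling; collapsing the cap inequalities to the single scalar inequality $\int_t^1 H(s)\diff s\ge0$ is exactly what sidesteps any doubling requirement.)

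Granting the claim, the statement is immediate when $S=N$: since $h(v)\in\mathbb{S}_0$ we have $\sum_j h_j(v)=0$ and $\|h(v)\|=1$, so $h_k(v)<0$ for some $k$, forcing $k\notin S$ — impossible. This settles $n=2$, and in general the cases $|S|\in\{1,n\}$ (for $|S|=1$ one simply has $v=v^i$). When $2\le|S|\le n-1$ it remains to rule out $v\ne v^S$, i.e.\ that $v$ has two distinct positive coordinates. Here I would reduce to an $|S|$-player problem by making the agents outside $S$ dummies: set $\psi^S(C)=(\psi_i(B_C))_{i\in S}$, where $B_C=\{z\in\mathbb{R}^n:z_k\le0\ \forall k\notin S,\ (z_i)_{i\in S}\in C\}$. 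Since $C\mapsto B_C$ is Minkowski-additive, positively homogeneous, Lipschitz, maps points to points and hulls to hulls, and makes each $k\notin S$ a dummy, $\psi^S$ is a solution inheriting all six axioms; the claim and the $S=N$ argument, applied to $\psi^S$, show that every Lebesgue point of a representation of $\psi^S$ has support strictly inside $S$. Because $\max_{z\in B_C}z\cdot w=\max_{z'\in C}z'\cdot(w_i)_{i\in S}$ for $w\in\mathbb{R}^n_+$, one gets $\psi^S_i(C)=\int\bigl(\max_{z'\in C}z'\cdot(w_i)_{i\in S}\bigr)h_i(w)\diff\mu(w)$; pushing $\mu$ forward along $w\mapsto(w_i)_{i\in S}/\|(w_i)_{i\in S}\|$ and using that support functions span a uniformly dense subspace of the continuous functions on the $|S|$-sphere, this pushforward agrees off the diagonal with a representation of $\psi^S$, for which $(v_i)_{i\in S}/\|(v_i)_{i\in S}\|$ — fully supported, and equal to the diagonal exactly when $v=v^S$ — is a Lebesgue point because $v$ is one of $h$. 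If $v\ne v^S$ this contradicts the previous sentence, so $v=v^S$.

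I expect the cap-perturbation claim to be the technical core — it is the only place monotonicity enters — while the measure-theoretic bookkeeping in the reduction, and in particular transferring the Lebesgue-point property through the pushforward to $\psi^S$, is the part demanding the most care.
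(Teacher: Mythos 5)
Your ``cap'' argument is correct and is a genuinely different use of monotonicity from the paper's: perturbing the comprehensive hull of a ball by the single point $p+R'v$ and integrating the resulting inequality $\int (v\cdot w-t)^+h_i(w)\diff\mu(w)\ge 0$ against the Lebesgue-point property cleanly yields $h_i(v)\ge 0$ for all $i\in\mathrm{supp}(v)$, which disposes of the full-support and singleton cases (and hence of $n=2$). The paper instead combines monotonicity with the dummy property and efficiency on an inflated cap of $\mathbb{S}_+\cap H$ to force the \emph{exact equality} $\psi(B_{\alpha r})=\psi(B_0)$, which is what delivers two-sided information about $h(v)$; your claim extracts only the one-sided (monotonicity) half, and this is why you are forced into the reduction for $2\le|S|\le n-1$.

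That reduction is where the proof breaks. The step ``$(v_i)_{i\in S}/\|(v_i)_{i\in S}\|$ is a Lebesgue point \emph{because $v$ is one of $h$}'' does not hold: the projection $\pi(w)=(w_i)_{i\in S}/\|(w_i)_{i\in S}\|$ has large fibres, and the pushforward density of $\|w_S\|\,h(w)\diff\mu(w)$ near $\pi(v)$ aggregates the contribution of the entire tube $\pi^{-1}(\mathbb{B}(\pi(v),r))$, which contains points $w$ far from $v$ (e.g.\ with $w_S\propto v_S$ but $w_j>0$ for $j\notin S$) at which $h$ bears no relation to $h(v)$; an atom of $\mu$ at such a $w$ already defeats the transfer. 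Moreover, the ``no full-support Lebesgue points'' conclusion you want to contradict is established for the \emph{canonical} representation of $\psi^S$ (the one with $h^S$ valued in the $|S|$-dimensional analogue of $\mathbb{S}_0$), and Lebesgue points are relative to the base measure, so matching the vector measures $h^S\diff\mu^S=g\diff\tilde\mu$ via density of support functions (itself left unproved, and delicate at the excluded diagonal point) is not enough to conclude. The missing ingredient is the paper's maximality device: if you choose $v$ with $S=\mathrm{supp}(v)$ maximal among counterexamples in $\mathbb{L}$, then $\mu$-almost every point of the tube $\pi^{-1}(A)$, for $A$ a small ball around $\pi(v)$ avoiding the diagonal, actually lies in $H\cap\mathbb{S}_+$ (points of $\mathbb{L}$ with strictly larger support are of the form $v^T$ and project to the diagonal), so the fibre localises to $\mathbb{B}_+(v,r)\cap H$ and one can conclude $\int_{\mathbb{B}_+(v,r)}h_i\diff\mu=0$, hence $h_i(v)=0$ for $i\in S$ via the original Lebesgue-point property of $v$ (with the dummy property supplying $h_i(v)=0$ for $i\notin S$). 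Without maximality, and without rerouting the contradiction through $v$ itself rather than through a purported Lebesgue point of the reduced representation, the argument as written does not close.
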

\begin{proof}
	Fix $v \in \mathbb{L}$ and, seeking a contradiction, suppose that there is no $S \in \mathcal{N}$ such that $v = v^S$. Let $S = \{i \in N : v_i > 0\}$. We may choose $v$ such that $S$ is maximal; that is, such that $S \subset \{i \in N : w_i > 0\}$ for some $w \in \mathbb{L}$ only if $w = v^T$ for some $T \in \mathcal{N}$.

	Let 
	\begin{align*}
		H &= \{x \in \mathbb{R}^n : x_i = 0 \text{ for each } i \in N \setminus S\} \text{ and}
		\\B_0 &= \ch\left(\mathbb{S}_+ \cap H\right)-\frac{\sqrt{|S|}}{n}\mathbf{1}^n,
	\end{align*} 
	noting that $B_0$ is well-defined, since $\mathbb{S}_+ \cap H$ is nonempty and bounded above, and lies in $\mathcal{B}_0$.
	Define
	\begin{align*}
		\xi^\alpha(w) &= (1+\alpha)w - \frac{\sqrt{|S|}}{n}\mathbf{1}^n \text{ and} \\
		B_{\alpha r} &= \ch\left(B_0 \cup \xi^\alpha\left(\mathbb{B}_+(v,r) \cap H\right)\right)
	\end{align*}
	for every $\alpha,r > 0$ and $w \in \mathbb{S}_+$, noting that $B_{\alpha r} \in \mathcal{B}_0$ for sufficiently small $\alpha$ and $r$, since $v \ne v^S$.

	\emph{Step 1:} We first show that the solutions at $B_0$ and $B_{\alpha r}$ are equal for sufficiently small $\alpha$ and $r$. The argument uses monotonicity to show that each $i\in S$ must be weakly better off under $B_{\alpha r}$ compared to $B_0$, and the dummy property to show that each $i\notin S$ must be unaffected. 

	For each $i \in N \setminus S$, 
	\begin{align*}
		\psi_i(B_{\alpha r}) + \frac{\sqrt{|S|}}{n} &= \psi_i\left(B_{\alpha r} + \frac{\sqrt{|S|}}{n}\mathbf{1}^{n}\right) = \psi_i\left(H \cap \left(\mathbb{S}_+ \cup \left((1+\alpha)\mathbb{B}_+(v,r)\right)\right)\right)
		\\&= 0 = \psi_i\left(\mathbb{S}_+ \cap H\right) = \psi_i\left(B_{0} + \frac{\sqrt{|S|}}{n}\mathbf{1}^{n}\right) = \psi_i\left(B_0\right) + \frac{\sqrt{|S|}}{n}
	\end{align*}
	where the first and last equalities follow from efficiency, anonymity and additivity, and the third and fourth follow from the dummy property.
	By efficiency, it is therefore enough to show that, given $i \in S$ and sufficiently small $\alpha$ and $r$, $\psi_i(B_{\alpha r}) \ge \psi_i(B_0)$.
	In turn, by continuity and monotonicity, it suffices to exhibit, given $i \in S$ and sufficiently small $\alpha$ and $r$, and for all $x \in B_{\alpha r}$, a $y \in B_0$ such that $y_i \le x_i$ and $y_j = x_j$ for all $j \in N \setminus i$.\footnote{To see why this suffices, choose $\{x^k\}_{k = 1}^\infty \subset B_{\alpha r}$ such that $\ch(B_0 \cup \{x^k\}_{k = 1}^\infty) = B_{\alpha r}$.
	For all $k \in \mathbb{N}$, choose $y^k \in B_0$ such that $y^k_i \le x^k_i$ and $y^k_j = x^k_j$ for each $j \in N \setminus i$, and note that $\psi_i(B_0) \le \psi_i(B_0 \cup x^1) \le \psi_i(B_0 \cup \{x^1,x^2\}) \le \dots$, by monotonicity.
	Then, $\psi_i(B_{\alpha r}) = \lim_{m \to \infty} \psi_i(B_0 \cup \{x^k\}_{k = 1}^m) \ge \psi_i(B_0)$, where the first equality follows from continuity. 
	}
	To this end, fix $i \in S$ and note first that, given sufficiently small $\alpha$ and $r$ and for every $w \in \mathbb{B}_+(v,r) \cap H$, there exists $\hat w \in \mathbb{S}_+ \cap H$ such that $\hat w_i \le (1+\alpha)w_i$ and $\hat w_j = (1+\alpha)w_j$ for each $j \in N \setminus i$.
	Now fix such $\alpha$ and $r$, as well as $x \in B_{\alpha r}$.
	Choose $w \in \mathbb{B}_+(v,r) \cap H$, $u \in B_0$, and $\beta \in [0,1]$ such that $x \le \beta \xi^\alpha(w) + (1-\beta)u$, and let $y=x+\beta(\hat w-(1+\alpha)w)$. Note that 
	\begin{equation*}
		y \le y -x + \beta \xi^\alpha(w) + (1-\beta)u = \beta\left(\hat w -\frac{\sqrt{|S|}}{n} \mathbf{1}^n\right) + (1-\beta)u \in B_0,
	\end{equation*}
	so that $y \in B_0$, and that $y_i \le x_i$, as desired.

	\emph{Step 2:} Next, we show that the suitably normalised difference in the solutions at $B_0$ and $B_{\alpha r}$ equals $h(v)$ in the limit as $r$ and $\alpha$ vanish. The argument relies on the fact that $v$ is a $\mu$-Lebesgue point of $h$. 

	Let $\Delta_{\alpha r} : \mathbb{S}_+ \to \mathbb{R}$ be given by 
	\begin{equation*}
		\Delta_{\alpha r}(w) = \frac 1 \alpha \left(\max_{x \in B_{\alpha r}} x \cdot w - \max_{x \in B_0} x \cdot w\right)
	\end{equation*}
	for all $\alpha, r > 0$, and $\pi : \mathbb{S}_+ \to \mathbb{R}^n$ be given by 
	\begin{equation*}
		\pi_i(w) = 
		\begin{dcases}
			\frac{w_i}{\sqrt{\sum_{j \in S}w_j^2}} & \text{if $i \in S$ and $w_j > 0$ for some $j \in S$} \\
			0 & \text{otherwise.}
		\end{dcases}
	\end{equation*}
	Note that, for every $w \in \mathbb{S}_+$,
	\begin{equation*}
		\max_{x \in B_0} x \cdot w = \left(\pi(w)-\frac{\sqrt{|S|}}{n}\mathbf{1}^n\right)\cdot w= \sqrt{\sum_{i \in S} w_i^2}-\frac{\sqrt{|S|}}{n}w_N,
	\end{equation*} 
	so that  
	\begin{equation}
		\label{eq:dB}
		\Delta_{\alpha r}(w) = \frac{1}{\alpha}\max\left\{\sqrt{\sum_{i \in S} w_i^2}\left((1+\alpha)\max_{u \in \mathbb{B}_+(v,r)} u \cdot \pi(w) - 1\right),0\right\}.
	\end{equation}
	Let
	\begin{equation*}
		W = \{w \in \mathbb{S}_+ : w_i = 0 \text{ for some } i \in S\} \cup \{v^T : S \subset T \subset N\}
	\end{equation*} 
	and note that $\mathbb{B}_+(v,r)$ and $\pi(W)$ are closed and disjoint for $r$ sufficiently small (since $v \ne v^S$), so that $u \cdot \pi(w)$ is bounded away from $1$ across all $u \in \mathbb{B}_+(v,r)$ and $w \in W$.
	Thus, given sufficiently small $\alpha$ and $r$, $\Delta_{\alpha r}$ vanishes on $W$ and, hence, on $\mathbb{L} \setminus H$, since $S$ is maximal.

	Therefore,
	\begin{align*}
		\lim_{r \downarrow 0} \lim_{\alpha \downarrow 0}\frac{\psi(B_{\alpha r})-\psi(B_0)}{\mu(\mathbb{B}_+(v,r))} 
		&= \lim_{r \downarrow 0} \frac 1 {\mu(\mathbb{B}_+(v,r))} \lim_{\alpha \downarrow 0} \int \Delta_{\alpha r}(w)  h(w) \diff\mu(w)
		\\&= \lim_{r \downarrow 0} \frac 1 {\mu(\mathbb{B}_+(v,r))} \lim_{\alpha \downarrow 0} \int_H \Delta_{\alpha r}(w)  h(w) \diff\mu(w)
		\\&= \lim_{r \downarrow 0} \frac 1 {\mu(\mathbb{B}_+(v,r))} \int_{\mathbb{B}_+(v,r) \cap H} h(w) \diff\mu(w)
		\\&= \lim_{r \downarrow 0} \frac 1 {\mu(\mathbb{B}_+(v,r))} \int_{\mathbb{B}_+(v,r)} h(w) \diff\mu(w)
		\\& = h(v),
	\end{align*}
	where the first equality follows from \cref{lemma:representation}, since $B_0 \in \mathcal{B}_0$ and $B_{\alpha r} \in \mathcal{B}_0$ for $\alpha$ and $r$ sufficiently small;
	the second holds since $\Delta_{\alpha r}$ vanishes on $\mathbb{L} \setminus H$ for sufficiently small $\alpha$ and $r$, and $\mu$ is concentrated on $\mathbb{L}$; 
	the third holds by the bounded convergence theorem, since $\Delta_{\alpha r}$ is bounded above by $\alpha$, equal to $\alpha$ on $\mathbb{B}_+(v,r) \cap H$, and, for every $w \in (\mathbb{S}_+ \cap H) \setminus \mathbb{B}_+(v,r)$, equal to zero for $\alpha$ sufficiently small, by \eqref{eq:dB};
	the fourth holds since $\mathbb{B}_+(v,r) \cap \mathbb{L} \subset H$ for $r$ sufficiently small, since $S$ is maximal, and $\mu$ is concentrated on $\mathbb{L}$;
	and the fifth holds since $v \in \mathbb{L}$.

	Steps 1 and 2 together imply that $h(v)=\mathbf{0}^n$, contradicting the fact that $h(v) \in \mathbb{S}_0$.
\end{proof}

\Cref{lemma:representation,lemma:vs} together imply that there exist weights $(w_S)_{S \in \mathcal{N}} \subset \mathbb{R}_+$ and vectors $(h^S)_{S \in \mathcal{N}} \subset \mathbb{S}_0$ such that  
\begin{equation}
	\label{eq:representation_finite}
	\psi(B)=\sum_{S\in \mathcal{N}} \left(\max_{x \in B} x_{S} \right) w_S h^S
\end{equation}
for every $B\in \mathcal{B}_0$. To complete the proof, it remains to determine these weights and vectors.

Fix $S \in \mathcal{N}$ and define $B'_\alpha = \ch(\mathbb{S}_+ \cup \{(1+\alpha)v^S\}) - \frac{1}{\sqrt{n}}\mathbf{1}^n$ for all $\alpha > 0$. 
For sufficiently small $\alpha$, $B'_\alpha \in \mathcal{B}_0$ and, by \eqref{eq:representation_finite}, 
\begin{equation}
	\psi(B'_\alpha) - \psi\left(B'_0\right) = \alpha \sqrt{|S|}w_S h^S\label{eqn:S_set}.
\end{equation}
Therefore, by anonymity, $h^S_i = h^S_j$ whenever either $i,j\in S$ or $i,j\notin S$, so that
\begin{equation}
	\label{eq:hS}
	h^S_i=
	\begin{cases}
		\sqrt{\frac{n-|S|}{n|S|}} & \text{if }i\in S\\
		-\sqrt{\frac{|S|}{n(n-|S|)}} & \text{otherwise.}
	\end{cases}
\end{equation}
Hence, by \eqref{eqn:S_set} and anonymity, there exist $(\bar w_k)_{k = 1}^{n-1}$ such that $w_S =\bar w_{|S|}$ for each $S \in \mathcal{N}$.

If $n = 2$, taking $y = (1,-1)$ in \eqref{eqn:trans_inv} yields
\begin{equation*}
	1 = \psi_1(\{(1,-1)\}) = \bar w_1 h^{\{1\}}_1 - \bar w_1 h^{\{2\}}_1 = \sqrt{2}\bar w_1,
\end{equation*} 
so that $\psi$ is uniquely determined on $\mathcal{B}_0$.

Assume instead that $n > 2$.
Fix $i \in N$ and a nonempty $S \subset N\setminus i$, and let  
\begin{equation*}
	B''_\alpha = \ch\left(\left\{v^T : T \in \mathcal{N},\; i\notin T\right\} \cup \left\{(1+\alpha)v^S\right\}\right) - \frac{\sqrt{n-1}}{n}\mathbf{1}^n
\end{equation*}
for every $\alpha \ge 0$.
Choose $\bar \alpha > 0$ such that $v^{N \setminus i}_N > (1+\bar\alpha)v_N^S$, so that $B''_\alpha \in \mathcal{B}_0$ for every $\alpha \in[0, \bar\alpha]$.
Note also that $x_i = -\sqrt{n-1}/n$ for every $x \in B''^{*}_\alpha$.
Then
\begin{equation*}
	\psi_i(B''_\alpha) = \psi_i\left(B''_\alpha+\frac {\sqrt{n-1}}{n}v^i\right) - \psi_i\left(\left\{\frac {\sqrt{n-1}}{n}v^i\right\}\right) = -\frac{\sqrt{n-1}}{n}
\end{equation*}
for every $\alpha \in[0, \bar\alpha]$, where the first equality follows from additivity and the second from the dummy property and \eqref{eqn:trans_inv}. 
Fix $\alpha\in (0 , \bar \alpha]$ small enough that
\begin{equation*}
	\argmax_{x \in B''_\alpha} x_T = \argmax_{x \in B''_\alpha} x_{T \cup i} = 
	\begin{cases}
		\left\{(1+\alpha)v^S-\frac{\sqrt{n-1}}{n}\mathbf{1}^n\right\} & \text{if $T = S$}\\
		\left\{v^T-\frac{\sqrt{n-1}}{n}\mathbf{1}^n\right\} & \text{otherwise}
	\end{cases}
\end{equation*} 
for each nonempty $T \subseteq N \setminus i$. Then we have  
\begin{align}
	\nonumber 0 &= \psi_i(B''_\alpha) - \psi_i(B''_0)
	\\\nonumber &= \alpha v^S_{S \cup i}\bar w_{|S|+1} h_i^{S \cup i} + \alpha v^S_{S}\bar w_{|S|} h_i^S
	\\\label{eq:recursive}&= \alpha\sqrt{|S|} \left(\bar w_{|S|+1}\sqrt{\tfrac{n-(|S|+1)}{n(|S|+1)}}
	-\bar w_{|S|} 
	\sqrt{\tfrac{|S|}{n(n-|S|)}}
	\right).
\end{align}
Hence
\begin{equation}
	\label{eq:int}
	\psi_i(B)=
	\sum_{S\subset N \setminus i}\left(\max_{x \in B} x_{S\cup i} -\max_{x \in B} x_{S} \right) \bar w_{|S|+1}\sqrt{\tfrac{n-(|S|+1)}{n(|S|+1)}}
\end{equation}
for each $i\in N$ and every $B\in\mathcal{B}_0$ (where, recall, $x_\emptyset = 0$). 

Next, taking $y\in\mathbb{R}^n$ such that $y_i =1$ and $y_N=0$ yields
\begin{equation}\label{eq:total}
	1=\psi_i(\{y\})=\sum_{S\subset N\setminus i}\bar w_{|S|+1}\sqrt{\tfrac{n-(|S|+1)}{n(|S|+1)}},
\end{equation}
where the first equality follows from \eqref{eqn:trans_inv} and the second from \eqref{eq:int}. 

Equation \eqref{eq:total} and the $n-2$ equations obtained by varying $S$ in \eqref{eq:recursive} yield a unique list of weights $(\bar w_k)_{k = 1}^{n-1}$. Hence $\psi$ is uniquely determined on $\mathcal{B}_0$, as desired.

\section{Surpluses and deficits}
\label{sec_surplus}
In some applications, alternatives are associated with surpluses or deficits. For instance, the power plant may have to be funded by the cities themselves rather than from earmarked funds, and different locations may entail different costs. We extend the model to allow for this and discuss two applications.

Let $\mathcal{C}$ be the set of all nonempty, closed, convex, comprehensive, and bounded above subsets of $\mathbb{R}^{n+1}$, endowed with the Hausdorff metric. Following \textcite{green:2005}, we interpret each element of $\mathcal{C}$ as a bargaining set, with the first $n$ entries representing the players' utilities and the last entry representing the surplus at the alternative; a negative last entry represents a deficit. (Since $x\in\mathbb{R}^{n+1}$ is not strictly speaking an allocation, we use the term `alternative'.)

Let $N+1=\{1,2,\ldots,n+1\}$. A solution on $\mathcal{C}$ is a map $\phi:\mathcal{C}\to\mathbb{R}^n$ such that $\phi_N(C)\le \max_{x\in C}x_{N+1}$ for every $C\in\mathcal{C}$. Thus a surplus expands the feasible set, whereas a deficit shrinks it. Given a nonempty and bounded above $X\subset\mathbb{R}^{n+1}$, let $\ch(X)$ denote the smallest set in $\mathcal{C}$ that contains $X$. If $X\subset\mathbb{R}^{n+1}$ is nonempty and bounded above, then $\phi(X)$ is understood to mean $\phi(\ch(X))$. Write $\mathbf{0}^{n+1}$ and $\mathbf{1}^{n+1}$ for the vectors in $\mathbb{R}^{n+1}$ consisting entirely of zeroes, and of ones, respectively.

We consider analogues of the axioms of \cref{sec_main}, except that we impose a strengthening of the dummy property.

\begin{axprime}[Efficiency]\label{axprime_efficiency}
	For every $C\in\mathcal{C}$ and $x\in C$, $\phi_N(C)= x_{N+1}$. 
\end{axprime}

Given a permutation $\rho : N \to N$ and $C \in \mathcal{C}$, let
\begin{equation*}
	\rho(C) = \{(x_{\rho(1)},\dots,x_{\rho(n)},x_{n+1}) : x \in \mathcal{C}\}.
\end{equation*}

\begin{axprime}[Anonymity]\label{axprime_anonymity}
	For every $C \in \mathcal{C}$ and permutation $\rho : N \to N$, $\phi(\rho(C)) = \rho(\phi(C))$.
\end{axprime}

\begin{axprime}[Compromise]\label{axprime_compromise}
	For every $C \in \mathcal{C}$ and $i\in N$, 
	\begin{equation*}
		\min_{x\in C^*}x_i+\frac{x_{n+1}}{n}\le \phi_i(C)\le \max_{x\in C}x_i+\frac{x_{n+1}}{n}.
	\end{equation*}
\end{axprime}

\Cref{axprime_compromise} is a strengthening of the dummy property. It states that a player should get at least as much as in their least preferred efficient alternative and at most as much as in their most preferred alternative, assuming any surplus or deficit is shared equally. The axiom of compromise is a combination of Moulin's \parencite*{moulin:1985} `minimal individual rationality' and `no free lunch' adapted to our model.
	
\begin{axprime}[Continuity]\label{axprime_continuity}
	$\phi$ is Lipschitz continuous with respect to the Hausdorff metric on $\mathcal{C}$.
\end{axprime}

\begin{axprime}[Additivity]\label{axprime_additivity}
	For every $B,C \in \mathcal{C}$, $\phi(B+C) = \phi(B) + \phi(C)$.
\end{axprime}

\begin{axprime}[Monotonicity]\label{axprime_mon}
	For every $C \in \mathcal{C}$, $i \in N$, and $x\in \mathbb{R}^n$ such that there exists $y \in C$ such that $x_i \ge y_i$ and $y_j = x_j$ for each $j \in N+1 \setminus \{i\}$, $\psi_i(C\cup x) \ge \psi_i(C)$.
\end{axprime}

\begin{theorem}
	\label{theorem:surplus}
	A solution $\phi$ on $\mathcal{C}$ satisfies efficiency, anonymity, compromise, continuity, additivity, and monotonicity if and only if
	\begin{equation}\label{eq:surplus}
		\phi_i(C)=\frac 1 n\sum_{S\subseteq N\setminus i} {n-1 \choose |S|}^{-1}\left(\max_{x \in C} (x_{S\cup i}+x_{n+1} ) -\max_{x \in C} (x_{S}+x_{n+1}) \right)
	\end{equation}
	for each $i\in N$ and every $C \in \mathcal{C}$.
\end{theorem}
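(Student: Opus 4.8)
The plan is to follow the template of \cref{theorem:solution}, carried out in $\mathbb{R}^{n+1}$ with the last coordinate singled out as the surplus index. For the ``if'' direction, efficiency follows by telescoping the sum in \eqref{eq:surplus}; anonymity, continuity and additivity follow as in \cref{sec_existence}, because each map $C\mapsto\max_{x\in C}(x_S+x_{n+1})$ is symmetric in the first $n$ coordinates, Lipschitz and linear; and monotonicity follows as in \cref{sec_existence}, since enlarging $C$ in the direction that improves player $i$ can only weakly raise each $\max_{x\in C}(x_{S\cup i}+x_{n+1})$ while leaving $\max_{x\in C}(x_S+x_{n+1})$ unchanged for every $S\subseteq N\setminus i$. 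The one genuinely new check is compromise: here I would use that $\phi_i(C)$ is a \emph{convex} combination of the marginal claims $\max_{x\in C}(x_{S\cup i}+x_{n+1})-\max_{x\in C}(x_S+x_{n+1})$ --- the coefficients $\tfrac1n\binom{n-1}{|S|}^{-1}$ summing to one over $S\subseteq N\setminus i$ --- and then show, using comprehensiveness of $C$, that each such marginal claim lies between $\min_{x\in C^*}(x_i+x_{n+1}/n)$ and $\max_{x\in C}(x_i+x_{n+1}/n)$.

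For the ``only if'' direction I would re-run the three stages of \cref{sec_uniqueness}. First, extend $\phi$ to $\widetilde\phi:\mathcal{C}\to\mathbb{R}^{n+1}$ by $\widetilde\phi_{n+1}(C)=\max_{x\in C}x_{N+1}-\phi_N(C)$; this makes $\widetilde\phi$ a continuous, additive solution that is efficient in the $(n+1)$-dimensional sense, and since $\phi$ satisfies \cref{axprime_efficiency} the coordinate $\widetilde\phi_{n+1}$ vanishes identically. The analogue of \cref{lemma:representation} for $\mathcal{C}$ (Chambers and Green's theorem in dimension $n+1$) then yields $\widetilde\phi(C)=\int(\max_{x\in C}x\cdot v)\,h(v)\diff\mu(v)$ on the normalized class $\mathcal{C}_0=\{C\in\mathcal{C}:\max_{x\in C}x_{N+1}=0\}$, with $\mu$ finite Borel on the nonnegative unit sphere of $\mathbb{R}^{n+1}$ and $h$ valued in the zero-sum unit sphere. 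Second, the argument of \cref{lemma:vs} carries over with only cosmetic changes --- the index $n+1$ is carried along passively, no solution coordinate and hence no monotonicity requirement being attached to it, and \cref{axprime_compromise} does the work of the dummy property --- so every $\mu$-Lebesgue point of $h$ is a coalition vector $v^T$ with $T$ a nonempty proper subset of $N+1$, whence $\phi(C)=\sum_T(\max_{x\in C}x_T)\,w_T h^T$ on $\mathcal{C}_0$ for finitely many weights $w_T\ge 0$ and vectors $h^T$ in the zero-sum unit sphere of $\mathbb{R}^{n+1}$; because $\widetilde\phi_{n+1}\equiv 0$, one moreover gets $h^T_{n+1}=0$ whenever $w_T>0$, so the relevant $h^T$ are zero-sum unit vectors in $\mathbb{R}^n$.

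The third stage, which I expect to contain the main work, determines the $w_T$ and $h^T$. Anonymity forces $w_T$ to depend only on $|T|$ and on whether $n+1\in T$, and $h^T$ to be constant on $T\cap N$ and on $N\setminus T$, which together with the zero-sum and unit-norm conditions fixes each relevant $h^T$; and applying \cref{axprime_compromise} to singleton sets determines $\phi$ on singletons, playing the role of efficiency and the dummy property for \eqref{eqn:trans_inv}. The crux --- and the one place where \cref{axprime_compromise} must do strictly more than the dummy property --- is to show that $w_T=0$ whenever $n+1\notin T$; the idea is to test the representation against bargaining sets in $\mathcal{C}$ whose surplus coordinate is inert, so that the terms with $n+1\in T$ are controlled by the singleton identity while a surviving $w_T$ with $n+1\notin T$ would push some $\phi_i(C)$ outside the compromise band. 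Once these weights are eliminated, the remaining $(w_T)_{n+1\in T}$ are pinned down by a recursion in $|T|$ together with the singleton identity, exactly as in the chain of displays culminating in \eqref{eq:int}--\eqref{eq:total}, yielding the formula in the theorem. (A complementary route for part of the third stage: the map $C\mapsto\ch\{(x_i+x_{n+1}/n)_{i\in N}:x\in C\}$ from $\mathcal{C}$ to $\mathcal{B}$ transports the primed axioms to their unprimed counterparts, so by \cref{theorem:solution} it determines $\phi$ on every $C$ of the form $\{(y,z)\in\mathbb{R}^{n+1}:y\in B,\ z\le s\}$; this does not by itself close the proof, since a general element of $\mathcal{C}$ is not a Minkowski sum of such sets, but it settles the ``separable'' case and is a useful cross-check.)
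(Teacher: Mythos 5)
Your overall architecture (lift $\phi$ to an $(n+1)$-coordinate solution with last coordinate $\equiv 0$, apply the Chambers--Green representation on $\mathcal{C}_0$, locate the Lebesgue points of $h$, then solve for the weights) matches the paper's, but Stage~2 of your uniqueness argument has a genuine gap. In the proof of \cref{lemma:vs}, each coordinate of a candidate Lebesgue point $v$ is disciplined by an axiom: monotonicity for the coordinates in $S=\{i:v_i>0\}$ and the dummy property for the rest, and it is the interplay of the two that forces $v$ to be an \emph{unweighted} coalition vector. In the surplus model no axiom of either kind attaches to the coordinate $n+1$: \cref{axprime_mon} is stated only for $i\in N$, and \cref{axprime_compromise} constrains only $\phi_1,\dots,\phi_n$. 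So ``cosmetic changes'' cannot deliver your claimed conclusion that every Lebesgue point is some $v^T$ with $T\subseteq N+1$; the argument has no way to control the size of $v_{n+1}$ relative to the positive coordinates in $N$. What one can actually extract --- and even this requires a non-cosmetic modification, namely redistributing each test alternative's surplus to the players outside $S$ (the map $\nu$ in the paper's \cref{lemma:vs_prime}) so that those players stay pinned by compromise while the sets remain in $\mathcal{C}_0$ --- is only that the projection of $v$ onto the first $n$ coordinates is proportional to some $v^S$. That leaves, for each $S\in\mathcal{N}$, a one-parameter continuum of candidate Lebesgue points with $v_{n+1}\in[0,1)$ free. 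Eliminating this continuum, i.e.\ showing $v_{n+1}$ must equal exactly $1/\sqrt{|S|+1}$ (the points $w^S$), is where \cref{axprime_compromise} does its real work, via two separate test-set arguments (one ruling out $v_{n+1}$ too large, one too small). Your Stage~3 contemplates only finitely many candidates $v^T$ and only the elimination of $T\not\ni n+1$, so it does not close this gap; it is the central difficulty that distinguishes \cref{theorem:surplus} from \cref{theorem:solution}.

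There is also a smaller but real error in your existence direction: the per-term bound you propose for compromise is false. Take $C=\ch\{(1,\dots,1,-n)\}$. Then $\min_{x\in C^*}(x_i+x_{n+1}/n)=\max_{x\in C}(x_i+x_{n+1}/n)=0$, yet the marginal claim of $i$ for every nonempty $S\subseteq N\setminus i$ equals $1$, and the $S=\emptyset$ term equals $1-n$; no individual term lies in the band, only their weighted average $\tfrac1n(1-n)+\tfrac{n-1}{n}\cdot 1=0$ does. The bounds must be term-specific: the $S=\emptyset$ term alone carries the full surplus and, having weight $1/n$, supplies the $x_{n+1}/n$ in the compromise bounds, while the $S\ne\emptyset$ terms are sandwiched between $\min_{x\in C^*}x_i$ and $\max_{x\in C}x_i$ (without any surplus term). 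Your singleton identity $\phi_i(\ch\{x\})=x_i+x_{n+1}/n$ and the reduction of the remaining weights to a recursion as in \eqref{eq:int}--\eqref{eq:total} are fine and match the paper.
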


The solution is analogous to the solution in \cref{theorem:solution}, except that the stand-alone claim of each coalition takes into account the surplus associated with different alternatives. In particular, the claim of a coalition ignores the other players: the surplus or deficit is fully captured by the coalition. 

We illustrate with two examples.

\begin{table}[t]
	\centering
		\begin{tabular}{llcccc}
		Size &Location &  Utility 1 &  Utility 2 & Cost \\ \hline
		\multirow{2}{*}{$L$} & $A$ &10 &4 & \multirow{2}{*}{4}\\
		&$B$&4&6&\\
		\multirow{2}{*}{$S$}& $A$ & 9 & 2 &\multirow{2}{*}{2}\\
		&$B$&2&7&
	\end{tabular}
	\caption{Building a power plant with costs and different sizes.}
	\label{tab_plants}
\end{table}

\begin{example}[Building a power plant]
	Suppose that the cost of the plant must be borne by the cities rather than from earmarked funds and that the government, in addition to deciding on the location, must decide whether to build a small $(S)$ or large $(L)$ plant. The utilities and costs of the different options are given in \cref{tab_plants}. The bargaining set is 
	\begin{equation*}
		C=\ch\{(10,4,-4),(4,6,-4),(9,2,-2),(2,7,-2)\}.
	\end{equation*}
	The claims are
	\begin{align*}
		\max_{x\in C}x_1+x_3&=9-2=7\\	
		\max_{x\in C}x_2+x_3&=7-2=5\\
		\max_{x\in C}x_1+x_2+x_3&=10+4-4=10.		
	\end{align*}
	Hence the solution is $\phi(C)=(6,4)$. The power plant is built at location $A$ with a large capacity and city 1 pays the entire cost. 
\end{example}

\begin{example}[Providing a public good] 
	The mayor of a town has decided to build a new park, to be funded by a special levy. Out of concern for fairness, the mayor wants those who will benefit more from the park to pay more. The civil servants divide the \num{10000} households of the town into three groups, depending on their distance to the park, and estimate that the benefit of household $i$ is 
	\begin{equation*}
		v_i=
		\begin{cases}
			20 &\text{if } i\le \num{2000}\\
			10 &\text{if }\num{2000}<i\le \num{6000}\\
			0 &\text{otherwise.}
		\end{cases}
	\end{equation*}
	The cost of building the park is $c=\num{40000}$. The bargaining set is 
	\begin{equation*}
		C=\ch\{(v_1,v_2,\ldots,v_n,-c),\mathbf{0}^{n+1}\}.
	\end{equation*}
	The marginal claim of player $i> \num{6000}$ is zero for any coalition $S\subseteq N\setminus i$, so $\phi_i(C)=0$ if $i>\num{6000}$. In general, the marginal claim of player $i$ for $S\subseteq N\setminus i$ is $\max\{v_S+v_i- \num{60000},0\}$. Since $n$ is large relative to $v_i$, the proportion of coalitions in which $i$ is pivotal is small, and so we can approximate the solution by ignoring the coalitions in which players are pivotal. For coalitions $S$ in which $i$ is not pivotal, her marginal claim is $v_i$ if $v_S\ge \num{60000}$ and zero otherwise. This implies that the solution awards approximately twice as much to households with value 20 than households with value 10. Since the net value created by the park is \num{40000}, we have 
	\begin{equation*}
		\phi_i(C)\approx
		\begin{cases}
			10 &\text{if } i\le \num{2000}\\
			5 &\text{if }\num{2000}<i\le \num{6000}\\
			0 &\text{otherwise.}
		\end{cases}
	\end{equation*}
	Thus the solution yields the following intuitive outcome: the households with zero value for the park pay nothing, the households with value 20 pay approximately 10 each, and the households with value 10 pay approximately 5 each.
\end{example}

\subsection{Proof that the solution satisfies the axioms}
Efficiency, anonymity, continuity, additivity, and monotonicity follow straightforwardly from analogous arguments to those made in \cref{sec_existence}.

For compromise, fix $C\in\mathcal{C}$ and $i\in N$. Let $y$ and $z$ be a solutions to $\min_{x\in C^*}x_i+\frac{x_{n+1}}{n}$ and $ \max_{x\in C}x_i+\frac{x_{n+1}}{n}$, respectively. The term in $\phi_i(C)$ corresponding to each $S\subseteq N\setminus i$ is at least $\min_{x\in C^*}x_i$. The term corresponding to $S=\emptyset$ is $\max_{x\in C}x_i+x_{n+1}$. Hence
\begin{align*}
	\phi_i(C)&\ge \frac{n-1}{n}\min_{x\in C^*}x_i+\frac{1}{n}\left(\max_{x\in C}x_i+x_{n+1}\right)\\
	&\ge\frac{n-1}{n}\min_{x\in C^*}x_i+\frac{1}{n}(y_i+y_{n+1})\\
	&=y_i+\frac{y_{n+1}}{n}\\
	&\ge \min_{x\in C^*}x_i+\frac{x_{n+1}}{n}.
\end{align*}
Similarly, since the term corresponding to $S\subseteq N\setminus i$ is at most $\max_{x\in C}x_i$,
\begin{align*}
	\phi_i(C)&\le \frac{n-1}{n}\max_{x\in C}x_i+\frac{1}{n}\left(\max_{x\in C}x_i+x_{n+1}\right)\\
	&\le\frac{n-1}{n}\min_{x\in C}x_i+\frac{1}{n}(z_i+z_{n+1})\\
	&=z_i+\frac{z_{n+1}}{n}\\
	&\le \max_{x\in C}x_i+\frac{x_{n+1}}{n}.
\end{align*}

\subsection{Proof that no other solution satisfies the axioms}
Let $\phi$ satisfy the axioms.
Let $\psi: \mathcal{C} \to \mathbb{R}^{n+1}$ be given by $\psi_i = \phi_i$ for all $i \in N$ and $\psi_{n+1} = 0$.
Note that $\psi$ may be viewed as a solution of the baseline model with $n+1$ players, and that it satisfies \cref{ax_efficiency,ax_additivity,ax_continuity}.
Then, setting 
\begin{align*} 
	\mathcal{C}_0 &= \left\{C \in \mathcal{C} : \max_{x \in C} x_{N+1} = 0 \right\},
	\\\mathbb{S}_+^\dag &= \{x \in \mathbb{R}^{n+1}_+ : \|x\| = 1\} \setminus \left\{\mathbf{1}^{n+1}/\sqrt{n+1}\right\}, \text{and}
	\\ \mathbb{S}_0^\dag &= \left\{x \in \mathbb{R}^{n+1} : \|x\| = 1 \text{ and }x_{N+1} = 0\right\},
\end{align*}
\cref{lemma:representation} implies that there exists a finite Borel measure $\mu$ on $\mathbb{S}_+^\dag$ and a Borel-measurable function $h : \mathbb{S}_+^\dag \to \mathbb{S}_0^\dag$ such that
\begin{equation}
	\label{eq:representation_prime}
	\psi(C) = \int \left(\max_{x \in C} x \cdot v\right) h(v) \diff \mu(v) 
\end{equation}
for every $C \in \mathcal{C}_0$.

Let $\mathbb{B}_+^\dag(v,r) = \{w \in \mathbb{S}_+^\dag : \|w-v\| \le r\}$ for every $v\in\mathbb{S}_\dag^+$ and $r>0$. Let $\mathbb{L}_\dag$ be the set of $v \in \mathrm{supp}(\mu)$ such that
\begin{equation*}
	\lim_{r \downarrow 0} \frac{\int_{\mathbb{B}_+^\dag(v,r)} \|h(w)-h(v)\| \diff \mu(w)}{\mu(\mathbb{B}_+^\dag(v,r))} = 0.
\end{equation*}
The Lebesgue differentiation theorem ensures that $\mu$-almost every $v \in \mathbb{S}_+^\dag$ lies in $\mathbb{L}_\dag$ \parencite[corollary 2.9.9]{federer:1969}.
Let $v^\dag \in \mathbb{R}^{n+1}$ be given by $v^\dag_i = 0$ for all $i \in N$ and $v^\dag_{n+1} = 1$, and $\rho : \mathbb{S}_+^\dag \to \mathbb{R}^n$ be given by
\begin{equation*}
	\rho(v) = 
	\begin{dcases}
		\frac {v_i}{\sqrt{\sum_{i \in N} v_i^2}} & \text{if $i \in N$}\\
		0 & \text{if $i = n+1$.}
	\end{dcases}
\end{equation*}
\begin{lemma}
	\label{lemma:vs_prime}
	$\rho(\mathbb{L}_\dag \setminus \{v^\dag\}) \subseteq \{v^S : S \in \mathcal{N}\}$.
\end{lemma}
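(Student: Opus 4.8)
The plan is to adapt the proof of \cref{lemma:vs} to $\mathbb{R}^{n+1}$, applied to the $(n+1)$-player solution $\psi$ (which already satisfies efficiency, additivity and continuity, so that \eqref{eq:representation_prime} is available) and exploiting that $\psi_{n+1}\equiv 0$. Fix $v\in\mathbb{L}_\dag\setminus\{v^\dag\}$ and, seeking a contradiction, suppose $\rho(v)\ne v^S$ for every $S\in\mathcal{N}$. Let $S=\{i\in N:v_i>0\}$, which is nonempty since $v\ne v^\dag$, and choose $v$ so that $S$ is maximal among the ($N$-)supports of such ``bad'' Lebesgue points. As in \cref{lemma:vs} I would take $H=\{x\in\mathbb{R}^{n+1}:x_i=0\text{ for each }i\in N\setminus S\}$ --- leaving the surplus coordinate free, so that $v\in H$ --- set $B_0=\ch(\mathbb{S}_+^\dag\cap H)-c\,\mathbf{1}^{n+1}$ with $c$ chosen so that $B_0\in\mathcal{C}_0$, and let $B_{\alpha r}=\ch\bigl(B_0\cup\bigl((1+\alpha)(\mathbb{B}_+^\dag(v,r)\cap H)-c\,\mathbf{1}^{n+1}\bigr)\bigr)$, which lies in $\mathcal{C}_0$ for small $\alpha,r$ because $\rho(v)\ne v^S$. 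Each new generator is a $(1+\alpha)$-dilation of a generator of $B_0$, so the perturbation $B_0\rightsquigarrow B_{\alpha r}$ is of order $\alpha$; note also that although $v^\dag\in H$, the normalised difference still vanishes at $v^\dag$ for small parameters, since $v_{n+1}<1$ and hence $(1+\alpha)\max_{u\in\mathbb{B}_+^\dag(v,r)}u_{n+1}-1<0$.

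The key observation --- which also simplifies matters relative to \cref{lemma:vs} --- is that it suffices to establish $\psi(B_{\alpha r})-\psi(B_0)\ge\mathbf{0}^{n+1}$ coordinate-wise for all small $\alpha,r$, rather than equality. Indeed, transcribing \cref{lemma:vs}'s Step 2 verbatim (writing $\pi^\dag$ for the normalised projection onto $\operatorname{span}(e_j:j\in S\cup\{n+1\})$, deriving the explicit formula for the normalised difference $\Delta_{\alpha r}$, checking via maximality of $S$ that $\Delta_{\alpha r}$ vanishes on $\mathbb{L}_\dag\setminus H$ for small $\alpha,r$, and invoking the bounded convergence theorem and $v\in\mathbb{L}_\dag$) yields that the suitably normalised limit of $\psi(B_{\alpha r})-\psi(B_0)$ equals $h(v)$; coordinate-wise non-negativity then forces $h(v)\ge\mathbf{0}^{n+1}$, and since $h(v)\in\mathbb{S}_0^\dag$ satisfies $\sum_{j\in N+1}h_j(v)=0$, this gives $h(v)=\mathbf{0}^{n+1}$, contradicting $\|h(v)\|=1$.

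It remains to carry out Step 1, i.e.\ to show $\psi_i(B_{\alpha r})\ge\psi_i(B_0)$ for each $i$. For $i=n+1$ this is immediate from $\psi_{n+1}\equiv 0$. For $i\in S$ I would run the monotonicity argument of \cref{lemma:vs} verbatim, using \cref{axprime_mon} and continuity: build $B_{\alpha r}$ from $B_0$ by adjoining the new generators one at a time along a countable dense sequence, each step satisfying the hypothesis of \cref{axprime_mon} via a nearby point $\hat w\in\mathbb{S}_+^\dag\cap H$ with $\hat w_i\le(1+\alpha)w_i$ and $\hat w_j=(1+\alpha)w_j$ for $j\ne i$. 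For $i\in N\setminus S$ I would invoke \cref{axprime_compromise}: $x_i$ is constant on $B_0^*$ and on $B_{\alpha r}^*$ (a coordinate that is constant across all generators cannot be Pareto-improved), so the aim is to arrange the construction so that the lower bound $\min_{x\in B_{\alpha r}^*}\bigl(x_i+x_{n+1}/n\bigr)$ supplied by compromise is at least the upper bound $\max_{x\in B_0}\bigl(x_i+x_{n+1}/n\bigr)$. When $v_{n+1}=0$ one can take the generators supported on $S$ alone, so that the surplus term is constant on the relevant Pareto frontiers and compromise pins $\phi_i$ exactly --- playing the role the dummy property plays in \cref{lemma:vs}; this case is essentially \cref{lemma:vs} embedded.

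The main obstacle is the case $v_{n+1}>0$. Here $v$ must lie in the slice $H$ (to keep the perturbation order $\alpha$ for Step 2), so the generators of $B_0$ necessarily include directions with positive last coordinate; but then, unlike the dummy property, \cref{axprime_compromise} only brackets $\phi_i$ (for $i\in N\setminus S$) between $\min_{C^*}(x_i+x_{n+1}/n)$ and $\max_C(x_i+x_{n+1}/n)$, and for the naive slice-construction this bracket for $B_0$ is too wide --- the pure-surplus generator $v^\dag$ pushes $\max_{B_0}(x_i+x_{n+1}/n)$ up by $1/n$, so the desired inequality fails. Getting the construction right --- controlling $x_{n+1}$ on the Pareto frontier while leaving enough room to perturb toward $v$, for instance by splitting off a pure-surplus component via \cref{axprime_additivity} and shifting so that the surplus term is constant on the frontier of the remaining piece --- is the crux of the argument. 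An alternative route is to reduce to \cref{lemma:vs} by embedding $\mathcal{B}_0$ into $\mathcal{C}_0$ through $B\mapsto\{x\in\mathbb{R}^{n+1}:(x_1,\dots,x_n)\in B,\ x_{n+1}\le 0\}$, on which compromise implies the dummy property for the induced $n$-player solution; the difficulty there is controlling the pushforward of $\mu$ under $\rho$ --- in particular ruling out mass at $\mathbf{1}^n/\sqrt{n}$ and verifying that the pushforward of $h$ still sums to zero --- which is precisely where $\psi_{n+1}\equiv 0$ re-enters.
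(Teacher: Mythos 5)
Your overall architecture (restrict to the slice $H$ with the surplus coordinate left free, choose $S$ maximal, run a monotonicity argument for $i\in S$, localise $\Delta_{\alpha r}$ via maximality and the Lebesgue-point property, and conclude $h(v)=\mathbf{0}^{n+1}$) matches the paper's, and your observation that coordinate-wise $\psi(C_{\alpha r})\ge\psi(C_0)$ would already suffice, since $h(v)\in\mathbb{S}_0^\dag$ sums to zero, is correct. But there is a genuine gap precisely at the point you yourself flag as the crux: for $i\in N\setminus S$ with $v_{n+1}>0$, you have no argument. Monotonicity cannot deliver $\phi_i(C_{\alpha r})\ge\phi_i(C_0)$ for such $i$, because the new generators strictly dominate $C_0$ in the $S\cup\{n+1\}$ coordinates, so no $y\in C_0$ matches them off coordinate $i$; and, as you note, for the naive construction (generators $w\in\mathbb{S}_+^\dag\cap H$ themselves) the compromise bracket for $C_0$ is slack --- the pure-surplus generator alone forces $\max_{x\in C_0}(x_i+x_{n+1}/n)-\min_{x\in C_0^*}(x_i+x_{n+1}/n)>0$ --- so compromise pins down neither $\phi_i(C_0)$ nor $\phi_i(C_{\alpha r})$. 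Gesturing at ``splitting off a pure-surplus component via additivity'' does not close this: $\ch(\mathbb{S}_+^\dag\cap H)$ is not a Minkowski sum of a pure-surplus piece and a surplus-free piece, so \cref{axprime_additivity} has no purchase.

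The paper's resolution is a concrete modification of the generators that you did not find: define the linear map $\nu:\mathbb{R}^{n+1}\to\mathbb{R}^{n+1}$ by $\nu_i(x)=-x_{n+1}/n$ for $i\in N\setminus S$ and $\nu_i(x)=x_i$ otherwise, and take $C_0=\ch(\nu(\mathbb{S}_+^\dag\cap H))-\lambda\mathbf{1}^{n+1}$ with the perturbed generators $\xi^\alpha(w)=(1+\alpha)\nu(w)-\lambda\mathbf{1}^{n+1}$. Every generator then satisfies $x_i+x_{n+1}/n=0$ identically for $i\in N\setminus S$, so the compromise bounds collapse to a point and force $\phi_i(C_0)=\phi_i(C_{\alpha r})=-\lambda$ exactly --- this is the surrogate for the dummy property that your construction lacks. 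Since $\nu$ is linear and is the identity on the coordinates $i\in S$, both your Step 1 monotonicity argument (the displacement $\hat w-(1+\alpha)w$ lives in an $S$-coordinate, where $\nu$ acts as the identity) and Step 2 (the perturbation is still a $(1+\alpha)$-dilation of $\nu(w)$, so the support-function computation and the localisation go through with $\nu$ inserted) survive unchanged. Without this, or an equivalent device, your proof is incomplete at the step where the dummy property was used in \cref{lemma:vs}.
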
		

\begin{proof}
	Fix $v \in \mathbb{L}_\dag \setminus \{v^\dag\}$ and, seeking a contradiction, suppose that there is no $S \in \mathcal{N}$ such that $\rho(v) = v^S$.
	Since $h(v) \in \mathbb{S}_0^\dag$, it is enough to show that $h(v) = \mathbf{0}^{n+1}$.
	To this end, choose $v$ such that $S = \{i \in N : v_i > 0\}$ is maximal; that is, such that $S \subset \{i \in N : w_i > 0\}$ for some $w \in \mathbb{L}_\dag$ only if $\rho(w) = v^T$ for some $T \in \mathcal{N}$.
	Let $\nu : \mathbb{R}^{n+1} \to \mathbb{R}^{n+1}$ be given by
	\begin{equation*}
		\nu_i(x) = 
		\begin{dcases}
			- \frac{x_{n+1}}{n} & \text{if $i \in N \setminus S$,} \\
			x_i & \text{otherwise} \\
		\end{dcases}
	\end{equation*}
	and let
	\begin{align*}
		H &= \{x \in \mathbb{R}^{n+1} : x_i = 0 \text{ for each } i \in N \setminus S\},
		\\\lambda &= \max \left\{\frac{x_{N+1}}{n} : x \in \ch\left(\nu\left(\mathbb{S}_+^\dag \cap H\right)\right)\right\},
		\\C_0 &= \ch\left(\nu\left(\mathbb{S}_+^\dag \cap H\right)\right)-\lambda\mathbf{1}^{n+1},
	\end{align*} 
	noting that $C_0$ is well-defined, since $\mathbb{S}_+^\dag \cap H$ is nonempty and bounded above, and lies in $\mathcal{C}_0$.
	Define
	\begin{align*}
		\xi^\alpha(w) &= (1+\alpha)\nu(w) - \lambda \mathbf{1}^{n+1}, \\
		C_{\alpha r} &= \ch\left(C_0 \cup \xi^\alpha\left(\mathbb{B}_+^\dag(v,r) \cap H\right)\right)\\
	\end{align*}
	for every $\alpha,r > 0$ and $w \in \mathbb{S}_+^\dag$, noting that $C_{\alpha r} \in \mathcal{C}_0$ for sufficiently small $\alpha$ and $r$.

	\emph{Step 1:}	We first show that the solutions at $C_0$ and $C_{\alpha r}$ are equal for sufficiently small $\alpha$ and $r$.

	Note that, for all $i \in N \setminus S$, 
	\begin{align*}
		\phi_i(C_{\alpha r}) + \lambda &= \phi_i\left(C_{\alpha r} + \lambda\mathbf{1}^{n+1}\right) = \phi_i\left(\nu\left(H \cap \left(\mathbb{S}_+^\dag \cup \left((1+\alpha)\mathbb{B}_+^\dag(v,r)\right)\right)\right)\right)
		\\&= 0 = \phi_i\left(\nu\left(\mathbb{S}_+^\dag \cap H\right)\right) =\phi_i\left(C_{0} + \lambda\mathbf{1}^{n+1}\right) = \phi_i\left(C_0\right) + \lambda
	\end{align*}
	where the first and last equalities follow from \cref{axprime_additivity,axprime_efficiency,axprime_anonymity}, the second holds since $\nu$ is linear, and the third and fourth follow from \cref{axprime_compromise}.
	By \cref{axprime_efficiency}, it is therefore enough to show that, given $i \in S$ and sufficiently small $\alpha$ and $r$, $\phi_i(C_{\alpha r}) \ge \phi_i(C_0)$.
	In turn, by \cref{axprime_mon,axprime_continuity}, it suffices to exhibit, given $i \in S$ and sufficiently small $\alpha$ and $r$, and for all $x \in C_{\alpha r}$, a $y \in C_0$ such that $y_i \le x_i$ and $y_j = x_j$ for all $j \in N \setminus i$.\footnote{To see why this suffices, choose $\{x^k\}_{k = 1}^\infty \subset C_{\alpha r}$ such that $\ch(C_0 \cup \{x^k\}_{k = 1}^\infty) = C_{\alpha r}$.
	For all $k \in \mathbb{N}$, choose $y^k \in C_0$ such that $y^k_i \le x^k_i$ and $y^k_j = x^k_j$ for each $j \in N \setminus i$, and note that $\phi_i(C_0) \le \phi_i(C_0 \cup x^1) \le \phi_i(C_0 \cup \{x^1,x^2\}) \le \dots$, by \cref{axprime_mon}.
	Then, $\phi_i(C_{\alpha r}) = \lim_{K \to \infty} \phi_i(C_0 \cup \{x^k\}_{k = 1}^K) \ge \phi_i(C_0)$, where the first equality follows from \cref{axprime_continuity}. 
	}
	To this end, fix $i \in S$ and note first that, given sufficiently small $\alpha$ and $r$ and for every $w \in \mathbb{B}_+^\dag(v,r) \cap H$, there exists $\hat w \in \mathbb{S}_+^\dag \cap H$ such that $\hat w_i \le (1+\alpha)w_i$ and $\hat w_j = (1+\alpha)w_j$ for each $j \in (N+1) \setminus i$.
	Now fix such $\alpha$ and $r$, as well as $x \in C_{\alpha r}$.
	Choose $w \in \mathbb{B}_+^\dag(v,r) \cap H$, $u \in C_0$, and $\beta \in [0,1]$ such that $x \le \beta \xi^\alpha(w) + (1-\beta)u$, and let $y=x+\beta(\hat w-(1+\alpha)w)$. Note that 
	\begin{equation*}
		y \le y -x + \beta \xi^\alpha(w) + (1-\beta)u = \beta\left(\nu(\hat w) -\lambda \mathbf{1}^{n+1}\right) + (1-\beta)u \in C_0,
	\end{equation*}
	so that $y \in C_0$, and that $y_i \le x_i$, as desired.

	\emph{Step 2:} Next, we show that the suitably normalised difference in the solutions at $C_0$ and $C_{\alpha r}$ equals $h(v)$ in the limit as $r$ and $\alpha$ vanish. 

	For all $\alpha, r > 0$, let $\Delta_{\alpha r} : \mathbb{S}^\dag_+ \to \mathbb{R}$ be given by 
	\begin{equation*}
		\Delta_{\alpha r}(w) = \frac 1 \alpha \left(\max_{x \in C_{\alpha r}} x \cdot w - \max_{x \in C_0} x \cdot w\right).
	\end{equation*}
	Define 
	\begin{equation*}
		W = \left\{w \in \mathbb{S}_+^\dag : w_i = 0 \text{ for some } i \in S\right\} \cup \rho^{-1}\left(\{v^T : S \subset T \subset N\}\right)
	\end{equation*} 
	and note that, for all $w \in W$, there exists $u \in \mathbb{S}_+^\dag \cap H$ such that $(\nu(u)-\nu(v)) \cdot w > 0$, since $v \ne \rho(v^S)$.
	Then, given $r$ sufficiently small, 
	\begin{equation*}
		\min_{\substack{w \in W \\ \hat v \in \mathbb{B}_+^\dag(v,r) \cap H}} \max_{u \in \mathbb{S}_+^\dag \cap H}  (\nu(u)-\nu(\hat v)) \cdot w > 0,
	\end{equation*}
	so that $\Delta_{\alpha r}$ vanishes on $W$ for sufficiently small $\alpha$. 
	Then, $\Delta_{\alpha r}$ vanishes on $\mathbb{L}_\dag \setminus H$ for sufficiently small $\alpha$ and $r$, since $S$ is maximal.
	Therefore
	\begin{align*}
		\lim_{r \downarrow 0} \lim_{\alpha \downarrow 0}\frac{\psi(C_{\alpha r})-\psi(C_0)}{\alpha \mu(\mathbb{B}_+^\dag(v,r))} 
		&= \lim_{r \downarrow 0} \frac 1 {\mu(\mathbb{B}_+^\dag(v,r))} \lim_{\alpha \downarrow 0} \int \Delta_{\alpha r}(w)  h(w) \diff\mu(w)
		\\&= \lim_{r \downarrow 0} \frac 1 {\mu(\mathbb{B}_+^\dag(v,r))} \lim_{\alpha \downarrow 0} \int_H \Delta_{\alpha r}(w)  h(w) \diff\mu(w)
		\\&= \lim_{r \downarrow 0} \frac 1 {\mu(\mathbb{B}_+^\dag(v,r))} \int_{\mathbb{B}_+^\dag(v,r) \cap H} h(w) \diff\mu(w)
		\\&= \lim_{r \downarrow 0} \frac 1 {\mu(\mathbb{B}_+^\dag(v,r))} \int_{\mathbb{B}_+^\dag(v,r)} h(w) \diff\mu(w)
		\\& = h(v),
	\end{align*}
	where the first equality follows from \eqref{eq:representation_prime}, since $C_0 \in \mathcal{C}_0$, and $C_{\alpha r} \in \mathcal{C}_0$ for sufficiently small $\alpha$ and $r$;
	the second holds since $\Delta_{\alpha r}$ vanishes on $\mathbb{L}_\dag \setminus H$ for sufficiently small $\alpha$ and $r$, and $\mu$ is concentrated on $\mathbb{L}_\dag$; 
	the third holds by the bounded convergence theorem, since $\Delta_{\alpha r}$ is bounded above by $\alpha$ on $H$ and equal to $\alpha$ on $\mathbb{B}_+^\dag(v,r) \cap H$, and $\lim_{\alpha \downarrow 0}\Delta_{\alpha r}(w) = 0$ for all $w \in (\mathbb{S}^\dag_+ \cap H) \setminus \mathbb{B}_+^\dag(v,r)$;
	the fourth holds since $\mathbb{B}_+^\dag(v,r) \cap \mathbb{L}_\dag \subset H$ for $r$ sufficiently small, since $S$ is maximal, and $\mu$ is concentrated on $\mathbb{L}_\dag$;
	and the fifth holds since $v \in \mathbb{L}_\dag$.

	Steps 1 and 2 together imply that $h(v)=\mathbf{0}^{n+1}$, contradicting the fact that $h(v) \in \mathbb{S}_0^\dag$.
\end{proof}

Let	$C_0=\ch(\mathbb{S}_\dag^+)-\frac{1}{\sqrt{n+1}}\mathbf{1}^{n+1}$ and define 
\begin{align*}
	\xi^\alpha(w)&=(1+\alpha)w-\frac 1 {\sqrt{n+1}}\mathbf{1}^{n+1}\\
	C_{\alpha r}(v)&=\ch \left(C_0\cup \xi^\alpha\left(\mathbb{B}_\dag^+(v,r)\right)\right)
\end{align*}
for every $\alpha,r>0$ and $v,w\in\mathbb{S}_\dag^+$. 
Note that, for all $v \in \mathbb{L}_\dag$, 
\begin{align}
	\nonumber &\lim_{r \downarrow 0} \lim_{\alpha \downarrow 0}\frac{\psi(C_{\alpha r}(v))-\psi(C_0)}{\alpha \mu(\mathbb{B}_+^\dag(v,r))} 
	\\\nonumber 
	&= \lim_{r \downarrow 0} \frac 1 {\mu(\mathbb{B}_+^\dag(v,r))}\lim_{\alpha \downarrow 0} \int \frac 1 \alpha \left(\max_{x \in C_{\alpha r}} x \cdot w - \max_{x \in C_0} x \cdot w \right) h(w) \diff\mu(w)
	h(v)
	\\\label{eq:lebesgue} &= \lim_{r \downarrow 0} \frac 1 {\mu(\mathbb{B}_+^\dag(v,r))} \int_{\mathbb{B}_+^\dag(v,r)} h(w) \diff\mu(w) = h(v),
\end{align} 
where the first equality follows from \eqref{eq:representation_prime} and the last holds since $v \in \mathbb{L}_\dag$.
Hence $h_{n+1}(v) = 0$ for every $v\in\mathbb{L}_\dag$.

We claim that $v^\dag \notin \mathbb{L}_\dag$. To see why, note that $\phi(C_{\alpha r}(v^\dag))=\phi(C_0)=\mathbf{0}^n$ by anonymity and efficiency. If $v^\dag \in \mathbb{L}_\dag$, then $h(v^\dag) = \mathbf{0}^{n+1}$ by \eqref{eq:lebesgue}, which is impossible, since $h(v^\dag) \in \mathbb{S}_0^\dag$. Hence $v^\dag \notin \mathbb{L}_\dag$. 

Note that, given $v\in \mathbb{L}_\dag$ such that $\rho(v)=v^S$ for some $S \in \mathcal{N}$, $\phi_i(C_{\alpha r}) = \phi_j(C_{\alpha r})$ for all $i,j \in N$ satisfying either $\{i,j\} \subseteq S$ or $\{i,j\} \subseteq N \setminus S$, by \cref{axprime_anonymity}.
Since $\phi(C_0) = \mathbf{0}^n$ and $h_{n+1} = 0$ on $\mathbb{L}_\dag$, \eqref{eq:lebesgue} implies that $h_i(v)$ satisfies \eqref{eq:hS} for all $i \in N$.

For each $S\in\mathcal{N}$, let $w^S\in\mathbb{S}_+^\dag$ be given by 
\begin{equation*}
	w^S_i=
	\begin{cases}
		\frac{1}{\sqrt{|S|+1}} &\text{if }i\in S\cup\{n+1\}\\
		0 &\text{otherwise.}
	\end{cases}
\end{equation*}
We claim that $\mathbb{L}_\dag\subseteq\{w^S:S\in\mathcal{N}\}$. Suppose first that there exists $v\in \mathbb{L}_\dag$ such that $\rho(v)=v^S$ for some $S\in \mathcal{N}$ but $v_{n+1}> w^S_{n+1}$, and seek a contradiction. Fix $\alpha\ge0$ and let $x^\alpha \in \mathbb{R}^{n+1}$ be given by
\begin{equation*}
	x^\alpha_i=
	\begin{cases}
		-1 &\text{if }i\in S\\
		-\alpha &\text{if }i\in N\setminus S\\
		|S| & \text{if }i = n+1.
	\end{cases}
\end{equation*}
Then, for $i\in S$, \eqref{eq:representation_prime} and \cref{lemma:vs_prime} yield $\phi_i(\ch\{x^\alpha,\mathbf{0}^{n+1}\})>0$ for $\alpha$ large enough, whereas \cref{axprime_compromise} implies $\phi_i(\ch\{x^\alpha,\mathbf{0}^{n+1}\})\le0$. 

Suppose now that there exists $v\in \mathbb{L}_\dag$ such that $\rho(v)=v^S$ for some $S\in \mathcal{N}$ but $v_{n+1}< w^S_{n+1}$, and seek a contradiction. Let $y=(1,\ldots,1,-n)$ and note that $\phi(\ch(\{y\})) = \mathbf{0}^n$ by \cref{axprime_anonymity,axprime_efficiency}.
Given $\alpha \ge 0$, let $z^\alpha$ be given by
\begin{equation*}
	z_i^\alpha=
	\begin{cases}
		-\frac 12 &\text{if }i\in S\\
		-\alpha &\text{if }i\in N\setminus S\\
		\frac n 2 & \text{if }i = n+1.
	\end{cases}
\end{equation*}
Note that, for $i\in S$, \eqref{eq:representation_prime} and \cref{lemma:vs_prime} yield $\phi_i(\ch\{y,z\})>\phi_i(\ch(\{y\})) = 0$ for $\alpha$ large enough, whereas \cref{axprime_compromise} implies $\phi_i(\ch\{y,z\})\le0$. Hence, $\mathbb{L}_\dag\subseteq\{w^S:S\in\mathcal{N}\}$. 

To complete the proof, it remains to determine $\mu(w^S)$ for each $S\in\mathcal{N}$. The argument is analogous to that used in the proof of \cref{theorem:solution}.

\appendix
\crefalias{section}{appendix}

\section{Other welfarist solutions}\label{app_welfarism}
The branch of the literature following \textcite{moulin:1985a} studies a non-welfarist framework: players have preferences over a set of public actions and the solution may depend not just on the set of allocations that are achievable without transfers (i.e., the bargaining set), but also on the set of public actions that yield each allocation. Among the solutions characterised only two are welfarist in the sense of depending on the bargaining set alone. The first is a classic solution in the cost-allocation literature called `equal allocation of non-separable costs' (EANS). The second is one member of the family of solutions called `equal sharing from an individual reference level'. It is the one in which each player's reference level is her maximal utility in the bargaining set, so we refer to it as `equal sharing from the ideal level' (ESIL). Both are discussed by \textcite{moulin:1985}. For a given player $i$ and bargaining set $B$, the two solutions are  
\begin{align*}
	\psi^{EANS}_i(B)&=\frac 1 n\left(\max_{x\in B}x_N+\sum_{j\in N}\max_{x\in B}x_{N\setminus j}\right)-\max_{x\in B}x_{N\setminus i} \text{ and}\\
	\psi^{ESIL}_i(B)&=\frac 1 n \left(\max_{x\in B} x_N-\sum_{j\in N}\max_{x\in B}x_j\right)+\max_{x\in B}x_i
\end{align*} 
respectively, where we recall from \cref{sec_model} that $N$ is the set of players, $n=|N|$, and $x_S=\sum_{i\in S}x_i$ for every $x\in B$ and $S\subseteq N$. 

Neither of these solutions satisfies the dummy property. To see why, suppose that there are three players. If $B$ is the smallest bargaining set containing $\{(0,2,0),(0,0,1)\}$, then player 1 is a dummy player but $\psi^{EANS}_1(B)=1/3$ and $\psi^{ESIL}_1(B)=-1/3$. 

\printbibliography

\end{document}